\newcommand{\powerset}{\raisebox{.15\baselineskip}{\large\ensuremath{\wp}}}
\DeclareMathAlphabet{\mathpzc}{OT1}{pzc}{m}{it}
\newcommand{\suchthat}{\;\ifnum\currentgrouptype=16 \middle\fi|\;}
\newcommand{\mrk}[1]{{#1}'}
\newcommand{\oldstack}[3]{%
{\ifthenelse{\equal{#1}{1}}{%
\mrk{#2}
}%
{#2}}_{#3}%
}
\newcommand{\stack}[3]{%
[%
{\ifthenelse{\equal{#1}{1}}{%
\mrk{#2}
}%
{#2}}\ {#3}%
]%
}
\newcommand{\tstack}[2]{%
[#1,\ #2]%
}
\newcommand{\va}[1]{\stackrel{#1}{\longrightarrow}}
\newcommand{\ourpath}[1]{\stackrel{#1}{\leadsto}}
\newcommand{\flush}[1]{\stackrel{#1}{\Longrightarrow}}
\newcommand{\shift}[1]{\stackrel{#1}{\dashrightarrow}}
\newcommand{\nont}[4]{\langle^{#1} #2, #3 {}^{#4} \rangle}
\newcommand{\chain}[3]{{}^{#1}\!\left[ #2 \right]\!{}^{#3}}
\newcommand{\tconfig}[3]{\langle #1, \ #2, \ #3 \rangle}
\newcommand{\symb}[1]{\mathop{symbol}(#1)}
\newcommand{\state}[1]{\mathop{state}(#1)}
\newcommand{\parrow}{}
\newcommand{\iarrow}{dashed}
\newcommand{\pair}[2]{\langle {#1},{#2}\rangle} 
\newcommand{\comp}[1]{ \stackrel {{#1}} \vdash }
\newcommand{\lp}{\llparenthesis}
\newcommand{\rp}{\rrparenthesis}
\newcommand{\opl}{OPL}
\newcommand{\opa}{OPA}
\newcommand{\opg}{OPG}
\newcommand{\tagxml}[2]{%
\langle %
{\ifthenelse{\equal{#1}{1}}{%
/#2  
}%
{#2}}%
\rangle%
}
\tikzset{
	path/.style={dotted},
	every edge/.style={draw,solid},
	normal/.style={solid},
	siblings/.style={dashed},
	support/.style={decorate, decoration={snake,amplitude=.4mm,segment length=2mm,post length=1mm}}
}
\newcommand{\pref}{\mathbb{P}}
\begin{document}

\title{Cyclic Operator Precedence Grammars for Parallel Parsing}

\author{Michele Chiari\inst{3}, Dino Mandrioli\inst{1}, Matteo Pradella\inst{1,2}}
	\institute{
		Dipartimento di Elettronica, Informazione e Bioingegneria (DEIB), Politecnico di Milano, Piazza Leonardo Da Vinci 32, 20133 Milano, Italy
    \and
        IEIIT, Consiglio Nazionale delle Ricerche, via Ponzio 34/5, 20133 Milano, Italy
		\email{\{dino.mandrioli,matteo.pradella\}@polimi.it}
    \and
        Institute of Computer Engineering, TU Wien, Treitlstraße 3, 1040 Vienna, Austria
        \email{michele.chiari@tuwien.ac.at}
	}
\authorrunning{M. Chiari, D. Mandrioli, M. Pradella}

\maketitle 
\begin{abstract}
Operator precedence languages (OPL) enjoy the local parsability property, 
which essentially means that a code fragment enclosed within a pair of markers 
---playing the role of parentheses---
can be compiled with no knowledge of its external context.
Such a property has been exploited to build parallel compilers for languages formalized as OPLs.
It has been observed, however, that when the syntax trees of the sentences have a linear substructure, 
its parsing must necessarily proceed sequentially making it impossible to split such a subtree into chunks to be processed in parallel.
Such an inconvenience is due to the fact that so far much literature on OPLs has assumed the hypothesis that equality precedence relation cannot be cyclic. 
This hypothesis was motivated by the need to keep the mathematical notation as simple as possible.

We present an enriched version of operator precedence grammars, called cyclic, that allows to use a simplified version of regular expressions in the right hand sides of grammar's rules;
for this class of operator precedence grammars the acyclicity hypothesis of the equality precedence relation is no more needed to guarantee the algebraic properties of the generated languages. 
The expressive power of the cyclic grammars is now fully equivalent to that of other formalisms defining OPLs such as operator precedence automata, monadic second order logic and operator precedence expressions.
As a result cyclic operator precedence grammars now produce also unranked syntax trees and sentences with flat unbounded substructures that can be naturally partitioned into chunks suitable for parallel parsing.

\keywords{Operator Precedence Languages \and Cyclic Precedence Relations \and Parallel Parsing}
\end{abstract}

\section{Introduction}
\emph{Operator precedence languages (OPL)} are a ``historical'' family of languages invented by R. Floyd \cite{Floyd1963} to support fast deterministic parsing. Together with their  \emph{operator precedence grammars (OPG)}, they are still used within modern compilers to parse expressions with operators ranked by  priority.
The key feature that makes them well amenable for efficient parsing and compilation is that the syntax tree  of a sentence is determined
exclusively by three binary precedence relations over the terminal alphabet that are easily pre-computed from the grammar productions. 
For readers unacquainted with
 OPLs, we anticipate an example:  the arithmetic sentence  $a + b \times c$ does not make manifest the natural structure
$(a + (b \times c))$,  but the latter is implied by the fact that the plus operator yields precedence to the times.

Early theoretical investigation \cite{Crespi-ReghizziMM1978}, originally motivated by grammar inference goals, realized that, thanks to the  tree structure assigned
to  strings by the precedence relations, many closure
properties of regular languages hold for OPLs too, but only after a long intermission, renewed research \cite{Crespi-ReghizziM12} proved further algebraic properties of OPLs. 
Then, in \cite{LonatiEtAl2015} the  \emph{Operator
Precedence automata (OPA)} were defined as a formalism, paired with OPGs as well as pushdown automata are paired with context-free grammars (CFG), to formalize the recognition and parsing of OPLs.
In the same paper a \emph{monadic second-order (MSO) logic} characterization of OPLs
that  naturally extends  the classic one for regular languages was also produced. 
Furthermore \cite{MPC20} introduced the \emph{operator precedence expressions (OPE)} which extend traditional regular expressions in the same way as the MSO logic for OPLs extends the traditional one for regular languages. 
The extension of fundamental algebraic and logic properties of regular languages to OPLs has been completed in \cite{Henzinger23} where a characterization of OPLs in terms of a syntactic congruence with a finite number of equivalence classes is given.

Another distinguishing property  of OPLs is their \emph{local parsability}, 
i.e. the fact that a chunk of text included within a pair of symmetric precedence relations can be deterministically parsed even without knowing its context
\cite{BarenghiEtAl2012a}. This feature was exploited to produce a parallel parser generator which exhibited high performances w.r.t. traditional sequential parsers \cite{BarenghiEtAl2015}.

A different research branch ---which however is not the core aspect of this paper--- exploited the algebraic and logic properties of this family to extend to it the successful 
verification techniques based on model checking, with comparable performances with those obtained for regular languages \cite{DBLP:conf/cav/ChiariMP20}. In summary, this historical family of languages enjoys properties that allowed relevant modern applications such as automatic verification and parallel compilation.

It must be pointed out, however, that many ---not all--- of the algebraic properties discovered during such a research activity that lasted several decades were proved by assuming a hypothesis on the precedence relations defined on the input alphabet.
Although from a theoretical point of view this hypothesis slightly affects the generative power of OPGs ---but is not necessary, e.g., for OPAs and MSO logic so that these two formalisms are a little more powerful than OPGs---, 
so far no practical limitation due to it was discovered in terms of formalizing the syntax of real-life programming and data description languages. 
Thus, it has been constantly adopted in the various developments to avoid making the mathematical notation and technical details too cumbersome.

The recent contribution \cite{DBLP:conf/hpcasia/LiT23}, however, pointed out a weakness of the parallel compilation algorithm described in \cite{BarenghiEtAl2015} which in some cases hampers partitioning the input to be parsed in well-balanced chunks, so that the benefits of parallelism are affected.
Intuitively, the weakness is due to the fact that the ``normal'' precedence relation on the arithmetic operator $+$ compels to parse a sequence thereof by associating them either to the left or to right so that parsing becomes necessarily sequential in this case.
The authors also proposed a special technique to overtake this difficulty by allowing for an acceptable level of ambiguity, which in the case of OPGs determines a conflict for some precedence relations on the terminal alphabet.

Such normal precedence relation, however, which either let $+$ yield precedence to, or take precedence over itself, has no correspondence with the semantics of the operation whose result does not depend on the order of association.
So, why not giving the various occurrences of the $+$ operator the same precedence level as suggested by arithmetic's laws?%
\footnote{Of course, assuming that the use of parentheses does not alter the normal precedence between the various operators.}
Figure \ref{fig:ass vs flat plus} gives an intuitive idea of the different structures given to a sequence of $+$ operators by traditional OPGs and the natural semantics of the sum operation. More precise explanations will be given in the following sections.

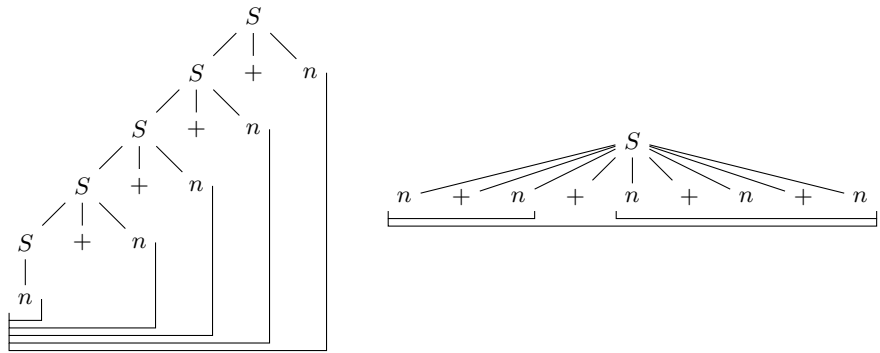
\begin{figure}[tb]
\centering
\begin{minipage}{.4\linewidth}
\begin{tikzpicture}[scale=0.5]
\node{$S$}
child{ node{$S$}
    child{ node{$S$}
        child{ node{$S$}
            child{ node{$S$}
                child{ node (n1) {$n$} }
            }
            child{ node{$+$} }
            child{ node (n2) {$n$} }
        }
        child{ node{$+$} }
        child{ node (n3) {$n$} }
    }
    child{ node{$+$} }
    child{ node (n4) {$n$} }
}
child{ node{$+$} }
child{ node (n5) {$n$} };
\coordinate (bot) at (n1.south);
\coordinate (dwn) at (0,-.2);
\draw let \p1=(n1.west), \p2=(n1.east), \p3=(bot) in (\x1,\y3) -- ($(\x1,\y3) + (dwn)$) -- ($(\x2,\y3) + (dwn)$) -- (\x2,\y2);
\draw let \p1=(n1.west), \p2=(n2.east), \p3=($(bot) + (dwn)$) in (\x1,\y3) -- ($(\x1,\y3) + (dwn)$) -- ($(\x2,\y3) + (dwn)$) -- (\x2,\y2);
\draw let \p1=(n1.west), \p2=(n3.east), \p3=($(bot) + 2*(dwn)$) in (\x1,\y3) -- ($(\x1,\y3) + (dwn)$) -- ($(\x2,\y3) + (dwn)$) -- (\x2,\y2);
\draw let \p1=(n1.west), \p2=(n4.east), \p3=($(bot) + 3*(dwn)$) in (\x1,\y3) -- ($(\x1,\y3) + (dwn)$) -- ($(\x2,\y3) + (dwn)$) -- (\x2,\y2);
\draw let \p1=(n1.west), \p2=(n5.east), \p3=($(bot) + 4*(dwn)$) in (\x1,\y3) -- ($(\x1,\y3) + (dwn)$) -- ($(\x2,\y3) + (dwn)$) -- (\x2,\y2);
\end{tikzpicture}
\end{minipage}
\begin{minipage}{.5\linewidth}
\begin{tikzpicture}[scale=0.5]
\node{$S$}
child{ node (n1) {$n$} }
child{ node{$+$} }
child{ node (n2) {$n$} }
child{ node{$+$} }
child{ node (n3) {$n$} }
child{ node{$+$} }
child{ node (n4) {$n$} }
child{ node{$+$} }
child{ node (n5) {$n$} };
\coordinate (bot) at (n1.south);
\coordinate (dwn) at (0,-.2);
\draw let \p1=(n1.west), \p2=(n2.east), \p3=(bot) in (\x1,\y3) -- ($(\x1,\y3) + (dwn)$) -- ($(\x2,\y3) + (dwn)$) -- (\x2,\y3);
\draw let \p1=(n3.west), \p2=(n5.east), \p3=(bot) in (\x1,\y3) -- ($(\x1,\y3) + (dwn)$) -- ($(\x2,\y3) + (dwn)$) -- (\x2,\y3);
\draw let \p1=(n1.west), \p2=(n5.east), \p3=($(bot) + (dwn)$) in (\x1,\y3) -- ($(\x1,\y3) + (dwn)$) -- ($(\x2,\y3) + (dwn)$) -- (\x2,\y3);
\end{tikzpicture}
\end{minipage}
\caption{Left-associative syntax tree (left) vs equal-level one (right) of the plus operator. The left syntax tree imposes a sequential left-to-right parsing and semantic processing whereas the right one can be split onto several branches to be partially processed independently and further aggregated.}
\end{figure}\label {fig:ass vs flat plus}

The answer to this question comes exactly from the above mentioned hypothesis: it forbids cyclic sequences of symbols that are at the same level of precedence; so that $+$ 
cannot be at the same level of itself. What was a, deemed irrelevant, theoretical limit hid a conceptual inadequacy which had not practical impact until OPLs properties have been exploited to build parallel compilers 
\footnote{It must also be recalled that in programming languages very long arithmetic expressions are quite unusual; not so in data description languages such JSON.}
We felt therefore ``compelled'' to finally remove this relatively disturbing hypothesis: this is the object of the present paper.

As often claimed in previous literature, when discussing the impact of forbidding cycles of operators all at the same level of precedence, removing such a restriction requires allowing grammar \emph{right hand sides (rhs) to include the Kleene $^*$ operator} (or the $^+$ one which excludes the empty string from its result).
Thus, we introduce the \emph{Cyclic operator-precedence grammars (C-OPG)} which include the above feature: 
whereas such a feature is often used in general context-free grammars to make the syntax of programming languages more compact but does not increase their expressive power, 
we show that C-OPGs are now fully equivalent to OPAs and other formalisms to define OPLs, such as the MSO logic.
We also show that all results previously obtained under the above hypothesis still hold by using C-OPGs instead of traditional OPGs.
Although the goal of this paper is not to develop parallel compilation algorithms rooted in C-OPGs, we show how they naturally overtake the difficulty pointed out by \cite{DBLP:conf/hpcasia/LiT23} and would allow to revisit their techniques, or to improve the efficiency of our previous parallel parser \cite{BarenghiEtAl2015}.

\section{Background}
We assume some familiarity with the classical literature on formal language and automata theory, e.g., \cite{Salomaa73,Harrison78}.
Here, we  just list and explain our notations for the basic concepts we use from this theory. 
The terminal alphabet is usually denoted by $\Sigma$, and the empty string is $\varepsilon$.
For a string, or set, $x$, $|x|$ denotes the length, or the cardinality, of $x$.
The character $\#$, not present in the terminal alphabet, is used as string \textit{delimiter}, and we define the alphabet  $\Sigma_\# = \Sigma \cup \{\# \}$.

\subsection{Regular languages: automata, regular expressions, logic}\label{subsec:FA}
A \emph{finite automaton} (FA) $\mathcal{A}$ is defined  by a  5-tuple
 $( Q,\Sigma, \delta, I, F)$  where
$Q$ is the set of states, $\delta$ the \emph{state-transition relation} (or its \emph{graph} denoted by $\longrightarrow$),  $\delta\subseteq Q \times \Sigma \times Q$;
$I$ and $F$ are the nonempty subsets of $Q$ respectively comprising the  initial  and  final states. 
If the tuple $(q,a,q')$ is  in the relation $\delta$, the edge $q\xlongrightarrow {a} q'$ is in the graph. 
The transitive closure of the relation is defined as usual. 
Thus, for a string $x \in \Sigma^*$ such that there is a path from state $q$ to $q'$ labeled with $x$, the notation $q\xlongrightarrow  {x} q'$ is equivalent to    
 $(q,x,q') \in \delta^*$;
if $q \in I$ and $q'\in F$, then the string  $x$ is {\em accepted}\/ by $\mathcal{A}$, and $L(\mathcal{A})$
is the language of the accepted strings.

A \emph{regular expression} (RE) over an alphabet $\Sigma$ is a well-formed formula made with the characters of $\Sigma$, $\emptyset$, $\varepsilon$, the Boolean  operators $\cup, \neg, \cap$, the concatenation  $\cdot$, and the Kleene star operator $^*$. 
We also use the operator $^+$, and we assume that its scope is always marked by parentheses.
An RE $E$ defines a language over $\Sigma$, denoted by $L(E)$, in the natural way.

Finite automata and regular expressions define the same family of languages named \emph{regular} (or rational)  languages (REG).

\subsection{Grammars}

\begin{definition}[Grammar and language]\label{def:grammar}
A \emph{context-free (CF) grammar} is a tuple $G=(\Sigma, V_N,  P, S)$ where $\Sigma$ and $V_N$, with $ \Sigma \cap V_N = \emptyset$, are resp.\ the terminal and the nonterminal alphabets, the total alphabet is $V = \Sigma \cup V_N$, $P\subseteq V_N \times V^*$ is the rule (or production) set,  and $S \subseteq V_N$, $S \neq \emptyset$,  is the axiom set.
For a generic rule, denoted as $A \to \alpha$,  where $A$ 
and $\alpha$ are resp.\ called the left/right hand sides (lhs / rhs), the following forms  are relevant:
\begin{center}
\begin{tabular}{l@{\;:\; }p{10cm}}
axiomatic & $A \in S$
\\
terminal & $\alpha \in \Sigma^+$
\\
empty & $\alpha =\varepsilon$
\\
renaming & $\alpha  \in V_N$
\\ 
operator & $\alpha \not\in V^* V_N V_N V^*$, i.e., at least one terminal is interposed between any two nonterminals occurring in $\alpha$
\\
\end{tabular}
\end{center}

\par
A grammar is called \emph{backward deterministic} or a BD-grammar  (or \emph{invertible}) if $(B \to \alpha, C \to \alpha \in P)$ implies $B=C$. 
\par
If all rules of a grammar  are in operator form, the grammar is called an \emph{operator grammar} or O-grammar.

For brevity we give for granted the usual definition of \emph{derivation} denoted by the symbols $\xLongrightarrow[G]{}$ (immediate derivation), $\xLongrightarrow[G]{\ast}$ (reflexive and transitive closure of $\xLongrightarrow[G]{}$),
$\xLongrightarrow[G]{+}$ (transitive closure of $\xLongrightarrow[G]{}$),
$\xLongrightarrow[G]{m}$ (derivation in $m$ steps);
 the subscript $G$ will be omitted whenever clear from the context. 
 We give also for granted the notion of \emph{syntax tree (ST)}. 
 As usual, the {\em frontier} of a syntax tree is the ordered left to right sequence of the leaves of the tree.
\par
The \emph{language} defined by a grammar starting from a nonterminal  $A$ is 
$L_G (A) = \left\{w  \mid w \in \Sigma^*, A \xLongrightarrow[G]{\ast} w \right\}.$

We call $w$ a \emph{sentence} if $A \in S$.  
The union of  $L_G(A)$ for all $A\in S$ is the language $L(G)$ defined by $G$. 

Two grammars defining the same language are \emph{equivalent}. Two grammars generating the same set of syntax trees are \emph{structurally equivalent}.
\end{definition}

\noindent {\em Notation:} In the following, \emph{unless otherwise explicitly stated}, lowercase letters at the beginning of the alphabet will denote terminal symbols, lowercase letters at the end of the alphabet will denote strings of terminals, 
Greek letters at the beginning of the alphabet will denote strings in $V^*$. Capital letters will be used for nonterminal symbols.

Any grammar can be effectively transformed into an equivalent BD-grammar,
and also into an O-grammar~\cite{DBLP:reference/hfl/AutebertBB97,Harrison78} without renaming rules and without empty rules but possibly a single rule whose lhs is an axiom not otherwise occurring in any other production.
\emph{From now on, w.l.o.g., we exclusively deal with O-grammars without renaming and empty rules, with the only exception that, if $\varepsilon$ is part of the language, there is a unique empty rule whose lhs is an axiom that does not appear in the rhs of any production.}

\subsection{Operator precedence grammars}
We define the  operator precedence grammars (OPGs) following primarily \cite{DBLP:journals/csr/MandrioliP18}. 

Intuitively, operator precedence grammars are O-grammars whose parsing is driven by three \emph{precedence relations}, called \emph{equal}, \emph{yield} and \emph{take}, included in $\Sigma_\# \times \Sigma_\#$.
They are defined in such a way that two consecutive terminals of a grammar's rhs
---ignoring possible nonterminals in between---
are in the equal relation, while the two extreme ones ---again, whether or not preceded or followed by a nonterminal--- 
are preceded by a yield and followed by a take relation, respectively; 
in this way a complete rhs of a grammar rule is identified and can be \emph{reduced} to a corresponding lhs by a typical bottom-up parsing.
More precisely, the three relations are defined as follows. Subsequently we show how they can drive the bottom-up parsing of sentences.

\begin{definition}  
[\cite{Floyd1963}]\label{def:OP-gramm}\label{def:OPG}
Let $G=(\Sigma, V_N, P, S)$ be an O-grammar.
Let $a,b$ denote elements in $\Sigma$,  $A, B$ in $V_N$, $C$ either an element of $V_N$ or the empty string $\varepsilon$, and $\alpha, \beta$ range over  $V^*$.
The \textit{left and right terminal sets} of nonterminals are respectively:
\[
  \mathcal{L}_G(A)  = \left\{a \in \Sigma \mid \exists C: A \xLongrightarrow[G]{\ast} C a \alpha
  \right\}
  \text{ and }
  \ \mathcal{R}_G(A)  =  \left\{a \in \Sigma \mid \exists C: A \xLongrightarrow[G]{\ast} \alpha a C \right\}.
\]
(The grammar name will be omitted unless necessary to prevent confusion.) 

The \emph{operator precedence (OP) relations} are defined over $\Sigma_\# \times \Sigma_\# $ 
as follows:
\begin{itemize}
\item equal in precedence:
  $
    a\doteq b  \iff 
    \exists A\to\alpha aCb\beta \in P \ \ 
    $
\item takes precedence:
$  a\gtrdot b  \iff \exists 
  A\to\alpha B b\beta \in P,
  a\in \mathcal{R}(B); 
  \\
  a\gtrdot \#  \iff  a\in \mathcal{R}(B), B \in S$
\item yields precedence:
  $
  a\lessdot b \iff  \exists  
  A\to\alpha aB\beta \in P, 
	b\in \mathcal{L}(B); \\
  \# \lessdot b \iff b\in \mathcal{L}(B), B \in S.
$
\end{itemize}

The OP relations can be collected  into a  $|\Sigma_\#| \times  |\Sigma_\#|$ array, called the \emph{operator precedence matrix} of the grammar,  $OPM(G)$: 
for each (ordered) pair $(a,b)\in \Sigma_\# \times  \Sigma_\#$,  $OPM_{a,b}(G)$ contains the OP relations holding between $a$ and $b$. 
\end{definition}

Consider a square matrix:
$M=\left\{M_{a,b}\subseteq \left\{\doteq, \lessdot, \gtrdot \right\} \,\mid \, a, b \in \Sigma_\# \right\}$.
 Such a matrix  is  called  \emph{conflict-free} iff $\forall a,b \in \Sigma_\#$, $0 \leq |M_{a,b}|\leq 1$. A conflict-free matrix  is called \emph{total} or \emph{complete} iff $\forall a,b \in \Sigma_\#$, $|M_{a,b}|=1$. By convention, if $M_{\#,\#}$ is not empty, $M_{\#,\#} = \{\doteq\}$.
A matrix  is $\dot=$-\emph{acyclic} if the transitive closure of the $\dot=$ relation over $\Sigma \times \Sigma$ is irreflexive.

We extend the set inclusion relations and the Boolean operations in the obvious cell by cell way, to any two matrices having the same terminal alphabet.
Two matrices are \emph{compatible} iff their union is conflict-free.

\begin{definition}[Operator precedence grammar]\label{def:OPgrammar}
A grammar $G$ is an \emph{operator precedence grammar} (OPG) iff the matrix  $OPM(G)$ is conflict-free, i.e.\ the three OP relations are disjoint.
An OPG is $\dot=$-\emph{acyclic} if $OPM(G)$ is so. 
An \emph{operator precedence language (OPL)} is a language generated by an OPG.
\end{definition}

Figure \ref{fig:OPM1} (left) displays an  OPG, $G_{AE}$, which generates simple, unparenthesized arithmetic expressions and its OPM (center). 
The left and right terminal sets
of $G_{AE}$'s nonterminals $E$, $T$ and $F$ are, respectively:
$\mathcal{L}(E)  = \{+, \times,n\}$,
$\mathcal{L}(T)  = \{\times,n\}$, 
$\mathcal{L}(F)  = \{n\}$,
$\mathcal{R}(E) =  \{+, \times,n\}$,
$\mathcal{R}(T) =  \{\times,n\}$, and 
$\mathcal{R}(F) =  \{n\}$. 
  
\noindent\emph{Remark.} If the relation $\dot=$ is acyclic, then the length of the rhs of any rule of $G$ is bounded by the length of the longest $\dot=$-chain in $OPM(G)$. 

\begin{figure}
\begin{tabular}{m{0.4\textwidth}m{0.3\textwidth}m{0.2\textwidth}}
$
 	\begin{array}{ll}
	G_{AE}: &	S = \{E\} \\
 	&	E \to  E + T \mid T \times F \mid n  \\  	
 	&	T \to  T \times F \mid n      \\
 	&	F \to  n  \\
 	\end{array}
$ &
$
\begin{array}{c|cccc}
    &+ & \times  & n  & \#\\
\hline
+ & \gtrdot & \lessdot & \lessdot & \gtrdot\\
\times & \gtrdot & \gtrdot  & \lessdot & \gtrdot\\
n & \gtrdot & \gtrdot  &         & \gtrdot \\
\# & \lessdot & \lessdot &\lessdot \\
\end{array}
$ &
\centering
\begin{tikzpicture}[scale=0.4]
\node{$N$}
child{ node{$\#$} }
child{
node{$N$}
child{ node{$N$} 
    child{ node{$N$} child { node{$n$} } }
    child{ node{$+$} }
    child{ node{$N$} 
        child{ node{$N$} child { node{$n$} }}
        child{ node{$\times$} }
        child{ node{$N$} child { node{$n$} }}
    }
}
child{ node{$+$} }
child{ node{$N$} child { node{$n$} } }
}
child{ node{$\#$} };
\end{tikzpicture}
\end{tabular}
\caption{$G_{AE}$ (left), its OPM (center), and the syntax tree of $n + n \times n + n$ according to the OPM (right).}
\label{fig:OPM1}
\label{fig:ST}
\end{figure}
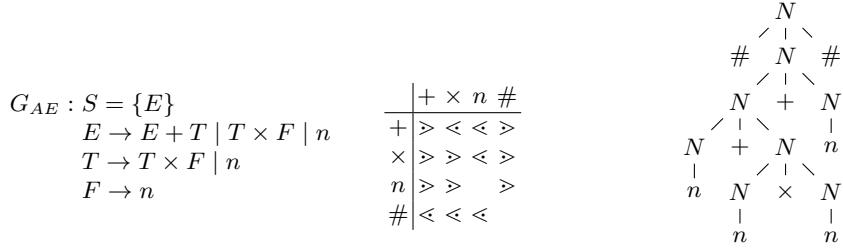

Unlike the  arithmetic relations having similar typography, the OP relations do not enjoy any of the transitive, symmetric, reflexive properties. We kept the original Floyd's notation but we urge the reader not to be confused by the similarity of the two notations.

It is known that the family of OPLs is strictly included within the deterministic and reverse-deterministic CF family, i.e., the languages that can be deterministically parsed both from left to right and from right to left. 

The key feature of OPLs is that a conflict-free OPM $M$ defines a universe of \emph{strings compatible with $M$} and associates to each of them a unique \emph{syntax tree} whose internal nodes are unlabeled and whose leaves are elements of $\Sigma$, or, equivalently, a unique parenthesization. We illustrate such a feature through a simple example and refer the reader to previous literature  for a thorough description of OP parsing \cite{GruneJacobs:08,DBLP:journals/csr/MandrioliP18}.

\begin{example}\label{ex:OPG:GAE_1}
Consider the $OPM(G_{AE})$ of Figure \ref{fig:OPM1} and the string $n + n \times n + n$. Display all precedence relations holding between consecutive terminal characters, \emph{including the relations with the delimiters \#} as shown below:
\[ 
\# \lessdot n \gtrdot + \lessdot n \gtrdot \times \lessdot  n \gtrdot  + \lessdot n \gtrdot  \#
\]
each pair $ \lessdot , \gtrdot $ (with no further $ \lessdot , \gtrdot $ in between) includes a \emph{possible} rhs of a production of \emph{any OPG} sharing the OPM with $G_{AE}$, not necessarily a $G_{AE}$ rhs. 
Thus, as it happens in typical bottom-up parsing, we replace
---\emph{possibly in parallel}---
each string included within the pair $ \lessdot , \gtrdot $ with a \emph{dummy nonterminal $N$}; this is because nonterminals are irrelevant for OPMs. 
The result is the string $\# N + N \times N + N \# $. Next, we compute again the precedence relation between consecutive terminal characters by \emph{ignoring nonterminals}: the result is 
$ \# \lessdot N + \lessdot N \times N \gtrdot  +  N \gtrdot  \# $.

This time, there is only one pair  $ \lessdot , \gtrdot $ including a potential rhs determined by the OPM (the fact that the external $ \lessdot $ and  $ \gtrdot $ ``look matched" is coincidental as it can be easily verified by repeating the previous procedure with the string $n + n \times n + n + n$).
Again, we replace the pattern $N \times N$, with the dummy nonterminal $N$; notice that there is no doubt about associating the two $N$ to the $\times$ rather than to one of the adjacent $+$ symbols: 
if we replaced, say, just the $\times$ with an $N$ we would obtain the string
$N + NNN + N$ which cannot be derived by an O-grammar.
By recomputing the precedence relations we obtain the string $\# \lessdot N + N \gtrdot + N \gtrdot \# $.
Finally, by applying twice the replacing of $N + N$ by $N$ we obtain $ \# N\# $. 

The result of the whole bottom-up reduction procedure is synthetically represented by the \emph{syntax tree} of Figure \ref{fig:ST} (right) which shows the precedence of the multiplication operation over the additive one in traditional arithmetics.
It also suggests a natural association to the left of both operations: 
if we reverted the order of the rhs of the rules rewriting $E$ and $T$, the structure of the tree would have suggested associativity to the right of both operations which would not have altered the semantics of the two operations which can indifferently be associated to the left and to the right; 
not so is we dealt with, say, subtraction or division which instead \emph{impose} association to the left.

Notice that the tree of Figure \ref{fig:ST} has been obtained ---uniquely and determin\-istically--- by using exclusively the OPM, not the grammar $G_{AE}$ although the string $n + n \times n + n \in L(G_{AE})$
\footnote{The above procedure that led to the syntax tree of Figure \ref{fig:ST} could be easily adapted to become an algorithm that produces a new syntax tree whose internal nodes are labeled by $G_{AE}$'s nonterminals.
Such an algorithm could be made deterministic by transforming $G_{AE}$ into an equivalend BD grammar (sharing the same OPM).}.

Obviously, all sentences of $L(G_{AE})$ can be given a syntax tree by $OPM(G_{AE})$, but there are also strings in $\Sigma^*$ that can be parsed according to the same OPM but are not in $L(G_{AE})$. 
E.g., the string $+++$ is parsed according to the $OPM(G_{AE})$ as a ST that associates the $+$ characters to the left. 
Notice also that, in general, not every string in $\Sigma^*$ is assigned a syntax tree by an OPM; e.g., in the case of $OPM(G_{AE})$ the parsing procedure applied to $nn$ is immediately blocked since there is no precedence relation between $n$ and itself.
\end{example}

The following definition synthesizes the concepts introduced by Example \ref{ex:OPG:GAE_1}.

\begin{definition}[OP-alphabet and Maxlanguage]\label{def:OP structures}
\begin{itemize}
\item
A string in $\Sigma^*$ is \emph{compatible} with an OPM $M$ iff the procedure described in Example~\ref{ex:OPG:GAE_1} terminates by producing the pattern $\#N\#$. The set of all strings compatible with an OPM $M$ is called the \emph{maxlanguage} or the \emph{universe} of $M$ and is simply denoted as $L(M)$.
\item
 Let $M$ be a conflict-free OPM over $\Sigma_\# \times \Sigma_\#$. We use the same identifier $M$ to denote the ---partial--- function $M$ that assigns to strings in $\Sigma^*$ their unique ST as informally illustrated in Example~\ref{ex:OPG:GAE_1}.
 \item 
 The pair $(\Sigma, M)$ where $M$ is a conflict-free OPM over $\Sigma_\# \times \Sigma_\#$, is called an \emph{OP-alphabet}. We introduce the concept of OP-alphabet as a pair to emphasize that it defines a universe of strings on the alphabet $\Sigma$ ---not necessarily covering the whole $\Sigma^*$--- and implicitly assigns them a structure univocally determined by the OPM, or, equivalently, by the function $M$.
 \item 
The class  of \emph{$(\Sigma, M)$-compatible} OPGs and OPLs are: 
\[
\mathscr{G}_M = \{G  \mid G \text{ is an OPG and } OPM(G) \subseteq M \},
\quad
 \mathscr{L}_M =  \{ L(G) \mid G \in \mathscr{G}_M\}. 
\]
\end{itemize}
\end{definition}

Various formal properties of OPGs and OPLs are documented in the literature, e.g., in~\cite{Crespi-ReghizziMM1978,Crespi-ReghizziM12,DBLP:journals/csr/MandrioliP18}. The next proposition recalls those that are relevant for this article.

\begin{proposition}[Algebraic properties of OPGs and OPLs]\label{propo:OPGandOPL}\label{PropOPL}
\begin{enumerate}
\item 
If an OPM $M$ is total, then the corresponding homonymous function is total as well, i.e., $L(M)= \Sigma^*$.

\item
Let $(\Sigma, M)$ be an OP-alphabet where $M$ is $\dot=$-acyclic. The class $\mathscr{G}_M$ contains an OPG, called the \emph{maxgrammar} of $M$, denoted by $G_{max, M}$, which generates the maxlanguage $L(M)$. 
For all grammars $G \in \mathscr{G}_M$, $L(G) \subseteq L(M)$.

\item The closure properties of the family $\mathscr{L}_M$ of $(\Sigma, M)$-compatible OPLs defined by a total OPM are the following: 
\begin{itemize}
\item $\mathscr{L}_M$ is closed under union, intersection and set-difference, therefore also under complement (if a maxgrammar of $M$ exists).
\item $\mathscr{L}_M$ is closed under concatenation.
\item if $M$ is $\dot=$-acyclic, $\mathscr{L}_M$ is closed under Kleene star.
\end{itemize}
\end{enumerate}
\end{proposition}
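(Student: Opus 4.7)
The proposition has three parts, and my plan is to treat them in order, each exploiting the shape-determining role of the OPM.

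For part (1), I would argue directly from the reduction procedure of Example~\ref{ex:OPG:GAE_1}. Given a total $M$, every consecutive pair of terminals in $\#w\#$ has a defined relation, so the procedure never blocks on an undefined pair. Because $\#\lessdot$ holds at the left boundary and $\gtrdot\#$ at the right, any reachable sentential form still contains at least one $\lessdot$ and one $\gtrdot$, and a leftmost $\lessdot$ followed (over zero or more $\doteq$) by a $\gtrdot$ always exposes a reducible handle $\lessdot a_1 \doteq \cdots \doteq a_k \gtrdot$. Each reduction strictly decreases the number of terminal symbols in the current form, so the procedure must terminate, and the only possible terminal configuration is $\#N\#$.

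For part (2), the plan is to build $G_{max,M}$ explicitly with a single nonterminal $N$. The $\doteq$-acyclicity of $M$ bounds the length of maximal $\doteq$-chains in $M$, and hence bounds the rhs length of any OPG compatible with $M$. For each maximal $\doteq$-chain $a_1 \doteq a_2 \doteq \cdots \doteq a_k$ admissible in $M$, I would include all rules of the form $N \to C_0 a_1 C_1 a_2 \cdots a_k C_k$ with each $C_i \in \{N,\varepsilon\}$ chosen so that the induced $\lessdot/\doteq/\gtrdot$ neighbors fall inside $M$. Checking $OPM(G_{max,M}) \subseteq M$ is then a direct application of Definition~\ref{def:OP-gramm}; the reverse inclusion $L(M) \subseteq L(G_{max,M})$ follows by induction on the number of handle reductions performed by $M$ on $w$, since each handle is by construction an rhs of $G_{max,M}$. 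The inclusion $L(G) \subseteq L(M)$ for arbitrary $G \in \mathscr{G}_M$ then follows because every derivation of $G$ produces a syntax tree whose internal yields are rhs of $G$, and $OPM(G) \subseteq M$ forces the surrounding precedences to agree with the $M$-reduction of the frontier, so the resulting sentence is $M$-compatible.

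For part (3), all constructions will be carried out inside $\mathscr{G}_M$ to preserve the common OPM. Union reduces to taking disjoint unions of nonterminals and axioms. Intersection uses a product construction on pairs of nonterminals $\langle A_1, A_2 \rangle$: since any $M$-compatible string has a unique $M$-determined tree shape, two compatible grammars $G_1, G_2$ can derive it only in lockstep on this shape, and pairing rules of matching positions yields a grammar whose OPM is still inside $M$. Set-difference follows from intersection, while complement uses $\Sigma^* \setminus L(G) = L(G_{max,M}) \setminus L(G)$, together with totality ($L(M) = \Sigma^*$) from part (1). Concatenation is obtained by introducing fresh axioms and boundary rules that play the role of the delimiter precedences at the join point. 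For Kleene star, an analogous boundary construction works, and $\doteq$-acyclicity is used to ensure that the iterated juxtapositions do not force an unbounded $\doteq$-chain through adjacent boundary symbols, something no (non-cyclic) OPG can express.

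The main obstacle is the intersection construction, which underpins the closure-under-intersection, complement, and set-difference results. The delicate point is proving that the product grammar remains in operator form and that its OPM sits inside the common $M$; both facts rely critically on the fact that the tree shape of any $M$-compatible string is uniquely determined by $M$ and not by the generating grammar, which is exactly what part (2) establishes. The Kleene-star step, although routine under $\doteq$-acyclicity, is precisely the construction the present paper aims to free from that hypothesis by introducing cyclic OPGs, so it is worth carrying out even in the acyclic case to locate exactly where the acyclicity is spent.
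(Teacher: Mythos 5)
The paper offers no proof of this proposition: it is explicitly a recollection of results from the cited literature (\cite{Crespi-ReghizziMM1978,Crespi-ReghizziM12,DBLP:journals/csr/MandrioliP18}), so your sketch can only be measured against the standard arguments. Your part (1) and the Boolean closures in part (3) follow those arguments correctly in outline (the handle-existence/termination argument, and the lockstep product construction justified by the fact that $M$ alone fixes the unlabeled tree of every compatible string). However, your maxgrammar construction in part (2) has a genuine flaw: with a single nonterminal $N$, the sets $\mathcal{L}(N)$ and $\mathcal{R}(N)$ are global attributes of $N$, not of a particular occurrence of $N$ in a rule, and since $N$ must generate all of $L(M)$ they are essentially all of $\Sigma$. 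Any rule containing $\dots a_i N \dots$ then induces $a_i \lessdot b$ for \emph{every} $b \in \mathcal{L}(N)$, which for a total $M$ clashes with every cell where $M_{a_i,b}$ is $\doteq$ or $\gtrdot$; the resulting $OPM(G)$ is not conflict-free, so your $G_{max,M}$ is not an OPG at all, let alone a member of $\mathscr{G}_M$. You cannot ``choose $C_i$ so that the induced relations fall inside $M$'' with one nonterminal: the known construction indexes nonterminals by their intended left and right terminal sets (equivalently, by the enclosing terminal context) precisely so that the induced relations reproduce $M$ cell by cell.

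The second gap is concatenation (and, derivatively, star). At the junction of $L_1 \cdot L_2$ there is no delimiter: the precedence between the last terminal of the first factor and the first terminal of the second is whatever $M$ dictates, and it can force the tree of one factor to be absorbed arbitrarily deep inside the tree of the other --- the remark following the proposition in the paper warns exactly that concatenation ``may produce significant reshaping of the original STs.'' Hence ``fresh axioms and boundary rules that play the role of the delimiter precedences at the join point'' cannot work as stated; the actual proof in \cite{Crespi-ReghizziM12} must track how a right spine of the first tree interleaves with a left spine of the second, and it is by far the most involved of the closure arguments. Your Kleene-star step inherits this same difficulty before the $\doteq$-acyclicity question (which you do locate correctly, in the unbounded $\doteq$-chains that iterated juxtaposition would otherwise create) even arises.
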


\noindent \emph{Remark}. Thanks to the fact that a conflict-free OPM assigns to each string at most one ST 
---and exactly one if the OPM is complete--- 
the above closure properties of OPLs w.r.t.\ Boolean operations automatically extend to sets of their STs.
The same does not apply to the case of concatenation which in general may produce significant reshaping of the original STs \cite{Crespi-ReghizziM12}.
Furthermore, any complete, conflict-free, $\dot=$-acyclic OPM defines a \emph{universe of STs} whose frontiers 
are $\Sigma^*$. 

\subsection{Chains and Operator Precedence Automata (OPA)}
The notion of \emph{chain} introduced next is an alternative way to represent STs where internal nodes are irrelevant and ``anonymized''.
\begin{definition}[Chains]
	Let $(\Sigma, M)$ be an OP-alphabet.
	\begin{itemize}
		\item A \emph{simple chain} is a word $a_0 a_1 a_2 \dots$ $a_n a_{n+1}$,
		written as
		$
		\chain {a_0} {a_1 a_2 \dots a_n} {a_{n+1}},
		$
		such that:
		$a_0, a_{n+1} \in \Sigma \cup \{\#\}$, 
		$a_i \in \Sigma$ for every $i: 1\leq i \leq n$, 
		$M_{a_0 a_{n+1}} \neq \emptyset$,
		and $a_0 \lessdot a_1 \doteq a_2 \dots a_{n-1} \doteq a_n \gtrdot a_{n+1}$.
		
		\item A \emph{composed chain} is a word 
		$a_0 x_0 a_1 x_1 a_2  \dots a_n x_n a_{n+1}$, with $x_i \in \Sigma^*$, 
		where\linebreak
		$\chain {a_0}{a_1 a_2 \dots a_n}{a_{n+1}}$ is a simple chain, and
		either $x_i = \varepsilon$ or $\chain {a_i} {x_i} {a_{i+1}}$ is a chain (simple or composed),
		for every $i: 0\leq i \leq n$. 
		Such a composed chain will be written as
		$\chain {a_0} {x_0 a_1 x_1 a_2 \dots a_n x_n} {a_{n+1}}$.
		\item The \textit{body} of a chain $\chain axb$, simple or composed, is the word $x$.
	\item Given a chain  $\chain {a} {x} {b}$  the \emph{depth} $d(x)$ of its body $x$ is defined recursively: 
	$d(x) = 1$ if the chain is simple, whereas 
	$d(x_0 a_1 x_1 \dots a_n x_n) = 1 + \max_i d(x_i)$. The depth of a chain is the depth of its body.
	\end{itemize}
\end{definition}

For instance, the ST of Figure \ref{fig:ST} (right) is biunivocally represented by the composed chain $\chain {\#} {x_0 + x_1} {\#}$,
where, in turn $x_0$ is the body of the composed chain $\chain {\#} {y_0 + y_1} {+}$, $y_0$ is the body of the simple chain 
$\chain {\#} {e} {+}$, $y_1$ is the body of the composed chain $\chain {+} {z_0 \times z_1} {+}$, etc. The depth of the main chain is $5$.

As well as an OPG selects a set of STs within the universe defined by its OPM, an \emph{operator precedence automaton (OPA)} selects a set of chains within the universe defined by an OP-alphabet.

\begin{definition}[Operator precedence automaton]\label{def:OPA}
	A nondeterministic \emph{operator precedence automaton (OPA)} is given by a tuple:
	$\mathcal A = \langle \Sigma, M, Q, I, F, \delta \rangle $ where:
	\begin{itemize}
		\item $(\Sigma, M)$ is an operator precedence alphabet,
		\item $Q$ is a set of states (disjoint from $\Sigma$),
		\item $I \subseteq Q$ is a set of initial states,
		\item $F \subseteq Q$ is a set of final states,
		\item $\delta$, named transition function, is a triple of functions:
		\[
	\delta_{\text{shift}}: Q \times \Sigma \rightarrow \powerset(Q)
		\qquad 
		\delta_{\text{push}}: Q \times \Sigma \rightarrow \powerset(Q)
		\qquad 
		\delta_{\text{pop}}: Q \times Q \rightarrow \powerset(Q)
		\]
	\end{itemize}
	
\end{definition}

We represent a nondeterministic OPA by a graph with $Q$ as the set of vertices and
$\Sigma \cup Q$ as the set of edge labelings. 
The edges of the graph are denoted by different shapes of arrows to distinguish the three types of transitions:
there is an edge from state $q$ to state $p$ labeled by $a \in \Sigma$ denoted by a dashed (respectively, normal) arrow if and only if $p \in \delta_{\text{shift}}(q,a)$ (respectively, $p \in \delta_{\text{push}}(q,a))$ and 
there is an edge from state $q$ to state $p$ labeled by $r \in Q$ and denoted by a double arrow if and only if $p \in \delta_{\text{pop}}(q,r)$.

To define the semantics of the automaton, we introduce some notations.

We use letters $p, q, p_i, q_i, \dots $ to denote states in $Q$.
Let $\Gamma$ be	$\Sigma \times Q$ and let $\Gamma'$ be $\Gamma \cup  \{\bot\} $; 
we denote symbols in $\Gamma'$ as $\tstack aq$ or $\bot$.
We set $\symb {\tstack aq} = a$, $\symb {\bot}=\#$, and
$\state {\tstack aq} = q$.
Given a string $\Pi = \bot \pi_1 \pi_2 \dots \pi_n $, with $ \pi_i \in \Gamma$ , $n \geq 0$, 
we set $\symb \Pi = \symb{\pi_n}$, including the particular case  $\symb \bot = \#$.

A \emph{configuration} of an \opa\ is a triple $C = \tconfig \Pi q w$,
where $\Pi \in \bot\Gamma^*$, $q \in Q$ and $w \in \Sigma^*\#$.
The first component represents the contents of the stack, the second component represents the current state of the automaton, while the third component is the part of input still to be read.

A \emph{computation} or \emph{run} of the automaton is a finite sequence of \emph{moves} or \emph{transitions} 
$C_1 \vdash C_2$; 
there are three kinds of moves, depending on the precedence relation between the symbol on top of the stack and the next symbol to read:

\smallskip
\noindent {\bf push move:} if $\symb \Pi \lessdot \ a$ then
$
\tconfig \Pi p {ax} \vdash \tconfig {\Pi \tstack a p} q x$, with $q \in \delta_{\text{push}}(p,a)$;

\smallskip
\noindent {\bf shift move:} if $a \doteq b$ then 
$
\tconfig {\Pi \tstack a p} q {bx} \vdash \tconfig {\Pi \tstack b p} r x$, with $r \in \delta_{\text{shift}}(q,b)$;

\smallskip

\noindent {\bf pop move:} if $a \gtrdot b$
then $
\tconfig {\Pi \tstack a p} q {bx} \vdash \tconfig \Pi r {bx}$, with $r \in \delta_{\text{pop}}(q, p)$.

Notice that shift and pop moves are never performed when the stack contains only $\bot$.

Push and shift moves update the current state of the automaton according to the transition function $\delta_{\text{push}}$ and  $\delta_{\text{shift}}$, respectively: push moves put a new element on the top of the stack consisting of the input symbol together with the current state of the automaton, whereas shift moves update the top element of the stack by changing its input symbol only.
The pop move removes the symbol on the top of the stack,
and the state of the automaton is updated by $\delta_{\text{pop}}$ on the basis of the pair of states consisting of the current state of the automaton and the state of the removed stack symbol; notice that in this move the input symbol is used only to establish the $\gtrdot$ relation and it remains available for the following move.

A configuration $\tconfig {\bot}{q_I}{x\#}$ is {\em initial} if $q_I \in I$;
a  configuration $\tconfig {\bot}{q_F}{\#}$ is {\em accepting} if $q_F \in F$.
The language accepted by the automaton is:
\[
L(\mathcal A) = \left\{ x \mid  \tconfig{\bot} {q_I}{x\#}  \comp * 
\tconfig {\bot}{q_F}{\#} , q_I \in I, q_F \in F \right\}.
\]

\begin{example}\label{ex:exprAut}
 The OPA depicted in Figure~\ref{fig:exprAut} (top, left) based on the OPM at the (top, right) accepts the language of arithmetic expressions enriched w.r.t  $L(G_{AE})$ in that it introduces the use of explicit parentheses to alter the natural precedence of arithmetic operations.
	The same figure (bottom) also shows an accepting computation on input $n + n \times \lp n + n \rp$.
\end{example}

\begin{figure}[tb]
\centering
\begin{tabular}{m{0.4\textwidth}m{0.4\textwidth}}
    \begin{tikzpicture}[every edge/.style={draw,solid}, node distance=4cm, auto, 
	every state/.style={draw=black!100,scale=0.5}, >=stealth]
		
    \node[initial by arrow, initial text=,state] (q0) {{\huge $q_0$}};
    \node[state] (q1) [right of=q0, xshift=0cm, accepting] {{\huge $q_1$}};
    \node[state] (q2) [below of=q0, xshift=0cm] {{\huge $q_2$}};
    \node[state] (q3) [right of=q2, xshift=0cm, accepting] {{\huge $q_3$}};
    
    \path[->]
    (q0) edge [below, \parrow] node {$n$} (q1)
    (q0) edge [bend right, right, \parrow]  node {$\lp$} (q2)
    (q1) edge [loop above, double, right]  node {$\ q_0, q_1$} (q1)
    (q1) edge [bend right, above, \parrow]  node {$+, \times$} (q0)
    
    (q2) edge [below,  \parrow] node {$n$} (q3)
    (q2) edge [loop left, left, \parrow] node {$\lp\ $} (q2)
    (q3) edge [loop above, right, double]  node {$\ q_0, q_1, q_2, q_3$} (q3) 
    (q3) edge [bend right, above, \parrow]  node {$+, \times$} (q2)
    (q3) edge [loop right, right, \iarrow] node {$\ \rp$} (q3);
	\end{tikzpicture}
	&
    \[
    \begin{array}{c|cccccc}
    &+ &\times & \lp & \rp & n  & \# \\
    \hline
    + & \gtrdot &\lessdot &\lessdot &\gtrdot &\lessdot &\gtrdot \\
    \times  & \gtrdot &\gtrdot &\lessdot &\gtrdot &\lessdot &\gtrdot\\
    \lp & \lessdot &\lessdot &\lessdot &\doteq &\lessdot \\
    \rp & \gtrdot &\gtrdot & &\gtrdot & &\gtrdot\\
    n & \gtrdot &\gtrdot & &\gtrdot & &\gtrdot\\
    \# & \lessdot & \lessdot& \lessdot && \lessdot
    \end{array}
    \]
\end{tabular}
\begin{scriptsize}
    \[
    \begin{array}{|l|c|r|}
    \hline
    \text{stack}   											 
    &  \text{state}
    &  \text{current input}\\\hline
    \bot &  q_0  &  n + n \times \lp n + n \rp \# \\ \hline 
    \bot  \tstack {n} {q_0}      &   q_1  &  + n \times \lp n + n \rp  \#  \\ \hline
    \bot &  q_1  &  + n \times \lp n + n \rp  \# \\ \hline 
    \bot  \tstack {+} {q_1}    &    q_0  &  n \times \lp n + n \rp  \#  \\ \hline
    \bot  \tstack {+} {q_1}\tstack  n {q_0}   &   q_1  &   \times \lp n + n \rp \#  \\ \hline
    \bot  \tstack  + {q_1}     &   q_1  &  \times \lp n + n \rp  \#  \\ \hline
    \bot  \tstack  + {q_1}    \tstack  \times {q_1}    &    q_0  &  \lp n + n \rp  \#  \\ \hline
    \bot  \tstack  + {q_1}    \tstack  \times {q_1}  \tstack  \lp {q_0}    &    q_2  &  n + n \rp  \#  \\ \hline
    \bot  \tstack  + {q_1}    \tstack  \times {q_1}  \tstack  \lp {q_0}  \tstack  n {q_2}   &  q_3  &   + n \rp  \#  \\ \hline
    \bot  \tstack  + {q_1}    \tstack  \times {q_1}  \tstack  \lp {q_0}    &  q_3  &    + n \rp  \#  \\ \hline
    \bot  \tstack  + {q_1}    \tstack  \times {q_1}  \tstack  \lp {q_0}  \tstack  + {q_3}  &   q_2  &  n \rp  \#  \\ \hline
    \bot  \tstack  + {q_1}    \tstack  \times {q_1}  \tstack  \lp {q_0}  \tstack  + {q_3} \tstack  n {q_2}  &   q_3  &  \rp  \#  \\ \hline
    \bot  \tstack  + {q_1}    \tstack  \times {q_1}  \tstack  \lp {q_0}  \tstack  + {q_3}  &  q_3  &   \rp  \#  \\ \hline
    \bot \tstack  + {q_1}    \tstack  \times {q_1}  \tstack  \lp {q_0}    &   q_3  &  \rp  \#  \\ \hline
    \bot \tstack  + {q_1}    \tstack  \times {q_1}  \tstack  \rp {q_0}    &  q_3  &     \#  \\ \hline
    \bot  \tstack  + {q_1}    \tstack  \times {q_1}   &   q_3  &    \#  \\ \hline
    \bot  \tstack  + {q_1}    &  q_3  &    \#  \\ \hline
    \bot &   q_3  &  \#  \\ \hline
    \end{array}
    \]
\end{scriptsize}
\caption{An OPA (top, left), its OPM (top, right) and an example of computation for the language of Example~\ref{ex:exprAut} (bottom). Arrows $\va{}{}$, $\shift{}$ and $\flush{}{}$ denote push, shift and pop transitions, respectively. To avoid confusion with the overloaded parenthesis symbols, the parentheses used as terminal symbols are denoted as $\lp$ and $\rp$.}
\label{fig:exprAut}
\end{figure}
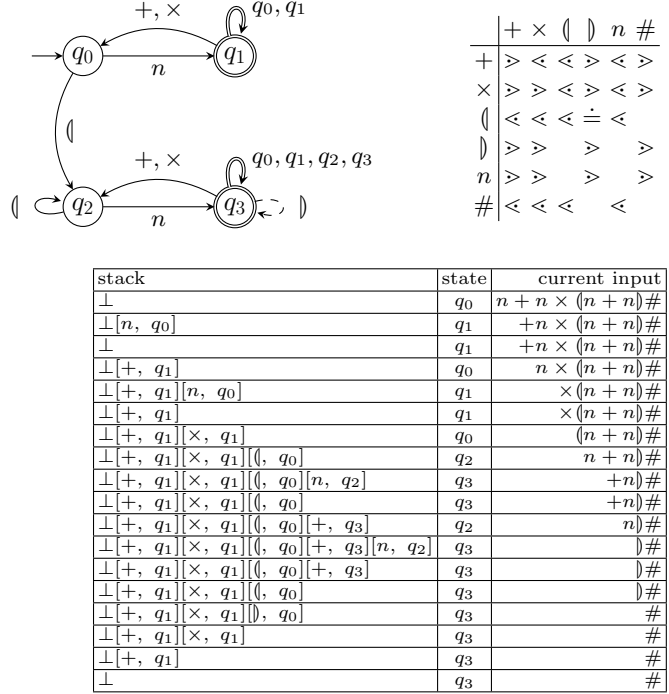 

\medskip

\begin{definition} 
	Let $\mathcal A$ be an \opa .
	A \emph{support} for a simple chain
	$\chain {a_0} {a_1 a_2 \dots a_n} {a_{n+1}}$
	is any path in $\mathcal A$ of the form
	\begin{equation}
	\label{eq:simplechain}
	q_0
	\va{a_1}{q_1}
	\shift{}{}
	\dots
	\shift{}q_{n-1}
	\shift{a_{n}}{q_n}
	\flush{q_0} {q_{n+1}}
	\end{equation}
	Notice that the label of the last (and only) pop is exactly $q_0$, i.e.\ the first state of the path; this pop is executed because of relations $a_0 \lessdot a_1$ and $a_n
	\gtrdot a_{n+1}$.
	
	\noindent A \emph{support for the composed chain} 
	$\chain {a_0} {x_0 a_1 x_1 a_2 \dots a_n x_n} {a_{n+1}}$
	is any path in $\mathcal A$ of the form
	\begin{equation}
	\label{eq:compchain}
	q_0
	\ourpath{x_0}{q'_0}
	\va{a_1}{q_1}
	\ourpath{x_1}{q'_1}
	\shift{a_2}{}
	\dots
	\shift{a_n} {q_n}
	\ourpath{x_n}{q'_n}
	\flush{q'_0}{q_{n+1}}
	\end{equation}
	where for every $i: 0\leq i \leq n$: 
	\begin{itemize}
		\item if $x_i \neq \varepsilon$, then ${q_i} \ourpath{x_i}{q'_i} $ 
		is a support for the (simple or composed) chain $\chain {a_i} {x_i} {a_{i+1}}$
		
		\item if $x_i = \varepsilon$, then $q'_i = q_i$.
	\end{itemize}
	\noindent Notice that the label of the last pop is exactly $q'_0$.
	\\
	The support of a chain with body $x$ will be denoted by ${q_0} \ourpath{x}{q_{n+1}}$.
\end{definition}

Notice that the context $ a, b$ of a chain $\chain axb$ is used by the automaton to build its support only because $ a \lessdot x$ and $x \gtrdot b$; thus, the chain's body contains all
information needed by the automaton to build the subtree whose frontier is that string, once it is understood that its first move is a push and its last one is pop.
This is a distinguishing feature of \opl s, not shared by other
deterministic languages: 
we call it the \emph{locality principle} of \opl s, which has been exploited to build parallel and/or incremental OP parsers \cite{BarenghiEtAl2012a}.

\section{Cyclic Operator Precedence Grammars (C-OPGs) and their equivalence with OPAs}\label{A-OPG vs OPA}

Proposition \ref{PropOPL} shows that \emph{some, but not all}, of the algebraic properties of OPLs depend critically on the $\dot=$-acyclicity hypothesis. 
This is due to the fact that without such a hypothesis the rhs of an OPG have an unbounded length but cannot be infinite: 
e.g., no OPG can generate the language $\{a,b\}^*$ if $a \doteq b$ and $b \doteq a$. In most cases cycles of this type can be ``broken'' 
as it has been done up to now, e.g., to avoid the $+ \doteq +$ relation in arithmetic expressions by associating the operator indifferently to the right or to left.
Thus, although it is known that, from a theoretical point of view, the $\dot=$-acyclicity hypothesis affects the expressive power of OPGs,%
\footnote{The language $\{a^n(bc)^n\} \cup \{b^n(ca)^n\} \cup \{c^n(ab)^n\} \cup (abc)^+$ cannot be generated by an OPG because the $a\doteq b \doteq c \doteq a$ relations are necessary \cite{HOP20}.}
we accepted so far the $\dot=$-acyclicity hypothesis to keep the notation as simple as possible. Recently, however, it has been observed \cite{DBLP:conf/hpcasia/LiT23} that such a restriction may hamper the benefits achievable by the parallel compilation techniques that exploit the local parsability property of OPLs \cite{BarenghiEtAl2015}.
Thus, it is time to introduce the necessary extension of OPGs so that the $\dot=$-acyclicity hypothesis can be avoided and they become fully equivalent to other formalisms to define OPLs.

\begin{definition}[Cyclic Operator Precedence Grammar]\label{def:A-OPG}
\begin{itemize}
\item 
A $^+$-O-expression on $V^*$ is an expression obtained from the elements of $V$ by iterative application of concatenation and the usual $^+$ operator
\footnote{For our purposes $^+$ is more convenient than $^*$ without affecting the generality.}, provided that any substring thereof has no two adjacent nonterminals; 
for convenience, 
and w.l.o.g., we assume that all subexpressions that are argument of the $^+$ operator are terminated by a terminal character.
    \item A Cyclic O-grammar (C-OG) is an O-grammar whose production rhs are $^+$-O-expressions.
    \item
    For a grammar rule $A \to \alpha$ of  an C-OG, the  $\xLongrightarrow[G]{}$ (immediate derivation) relation is defined as
    $\beta A \gamma \xLongrightarrow[]{} \beta \zeta \gamma$ iff $\zeta$ is a string belonging to the language defined by the $^+$-O-expression $\alpha$, $L(\alpha)$.
    \item 
    The equal in precedence relation  is redefined as $a\doteq b$  iff $\exists A \to \alpha 
     \land 
    \exists \zeta = \eta aBb \theta \mid  (B\in V_N\cup \{\varepsilon \}  
    \land \zeta \in L(\alpha))$.
    The other precedence relations remain defined as for non-cyclic O-grammars.
    \item 
    A C-OG is a cyclic operator precedence grammar (C-OPG) iff its OPM is conflict-free.
\end{itemize}
\end{definition}
As a consequence of the definition of the immediate derivation relation for C-OPGs the syntax-trees derived therefrom can be unranked, i.e., their internal nodes may have an unbounded number of children.
\begin{example}
\begin{figure}
\begin{minipage}{.7\linewidth}
\begin{align*}
&G_\text{A-AE}:\\
&
 	\begin{array}{ll}
	&	S = \{P, T, M, N, F, D, E\} \\
 	&	P\to  (T+)^+T  \\  	
 	&	T \to  (F\times)^+ F \mid M - N \mid D/E  \\
    & M \to M-N \mid (F\times)^+ F \mid D/E \\ 
    & N \to (F\times)^+ F \mid D/E \\ 
 	& F \to  D/E  \\
    & D \to  D/E  \\
    & \{P, T, M, N, F, D, E\} \to \lp \{P, T, M, N, F, D, E\} \rp \mid n
 	\end{array}
\end{align*}
\end{minipage}
\begin{minipage}{.29\linewidth}
\[
    \begin{array}{c|cccccccc}
    &+ & - &\times & / & \lp & \rp & n  & \# \\
    \hline
    + & \doteq &\lessdot &\lessdot &\lessdot &\lessdot &\gtrdot &\lessdot &\gtrdot\\
    - &\gtrdot &\gtrdot &\lessdot &\lessdot &\lessdot &\gtrdot &\lessdot &\gtrdot\\
    \times  & \gtrdot &\gtrdot & \doteq &\lessdot &\lessdot  &\gtrdot &\lessdot &\gtrdot\\
    / &\gtrdot &\gtrdot &\gtrdot &\gtrdot &\lessdot &\gtrdot &\lessdot &\gtrdot\\
    \lp & \lessdot &\lessdot &\lessdot & \lessdot &\lessdot &\doteq &\lessdot \\
    \rp & \gtrdot &\gtrdot & \gtrdot &\gtrdot & &\gtrdot & &\gtrdot\\
    n & \gtrdot &\gtrdot  &\gtrdot  &\gtrdot & &\gtrdot & &\gtrdot\\
    \# & \lessdot & \lessdot & \lessdot & \lessdot & \lessdot  && \lessdot
    \end{array}
\]
\end{minipage}
    \caption{A C-OPG (left) and its OPM (right). The notation $\{P,T,M,F,D,E\} \to \lp \{P, T, M, N, F, D, E\} \rp$ is an abbreviation meaning that anyone of the nonterminals at the left can be rewritten as a pair of parentheses enclosing anyone of the same nonterminals.}
    \label{fig:cyclic AE}
\end{figure}
The C-OPG depicted in Figure \ref{fig:cyclic AE} with its OPM generates a fairly complete language of ---not necessarily fully--- parenthesized arithmetic expressions involving the four basic operations: 
as usual the multiplicative operations take precedence over the additive ones; furthermore subtraction takes precedence over sum and division over multiplication.
The key novelty w.r.t. the traditional way of formalizing arithmetic expressions by means of OPGs are the $+ \doteq +$ and $\times \doteq \times$ OP relations; on the contrary we kept the structure that associates subtraction and division to the left, so that the grammar's STs ---an example thereof is given in Figure \ref{fig:cyclic ST}--- now fully reflect the semantics of the arithmetic operations.
\begin{figure}
\centering
\begin{tikzpicture}[
scale=.8,
level distance=3em,
level 1/.style={sibling distance=3em}
]
\node{$P$}
child{ node{$T$}
    [sibling distance=3em]
    child{ node{$M$}
        [sibling distance=1.5em]
        child{ node{$M$} child{ node{$n$} } }
        child{ node{$-$} }
        child{ node{$N$} child{ node{$n$} } }
    }
    child{ node{$-$} }
    child{ node{$N$}
        [sibling distance=1.5em]
        child{ node{$F$} child{ node{$n$} } }
        child{ node{$\times$} }
        child{ node{$F$} child{ node{$n$} } }
    }
}
child{ node{$+$} }
child{ node{$T$} child{ node{$n$ } } }
child{ node{$+$} }
child{ node[xshift=-.5em]{$T$}
    [sibling distance=1.5em]
    child{ node{$F$} child{ node{$n$} } }
    child{ node{$\times$} }
    child{ node{$F$}
        child{ node{$D$} child{ node{$n$} } }
        child{ node{$/$} }
        child{ node{$E$} child{ node{$n$} } }
    }
    child{ node{$\times$} }
    child{ node{$F$} child{ node{$n$} } }
}
child{ node{$+$} }
child{ node[xshift=.5em]{$T$}
    [sibling distance=1.5em]
    child{ node{$D$}
        child{ node{$\lp$} }
        child{ node{$P$}
            child{ node{$T$} child{ node{$n$} } }
            child{ node{$+$} }
            child{ node{$T$} child{ node{$n$} } }
        }
        child{ node{$\rp$} }
    }
    child{ node{$/$} }
    child{ node{$E$} child{ node{$n$} } }
}
child{ node{$+$} }
child{ node{$T$} child{ node{$n$} } }
child{ node{$+$} }
child{ node{$T$} child{ node{$n$} } };
\end{tikzpicture}
\caption{A ST generated by the C-OPG of Figure \ref{fig:cyclic AE}.}
\label{fig:cyclic ST}
\end{figure}
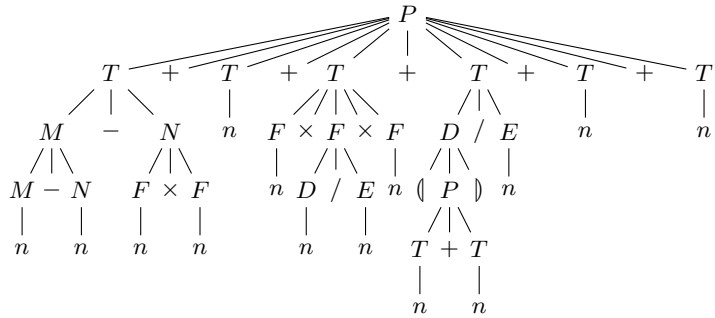
\end{example}\label{ex:cyclic ST}

Notice that $G_\text{A-AE}$ is purposely ambiguous to keep it compact. To support deterministic parsing it should be transformed into a BD ---and therefore unambiguous--- form.

By looking at the ST of Figure \ref{fig:cyclic ST} and comparing it with the original Figure \ref{fig:ass vs flat plus},
one  can get a better envision of why the introduction of cyclic $\doteq$ can support more effective parallel compilation algorithms for OPLs.
Parallel parsing and compilation for OPLs is rooted in the \emph{local parsability property} of this family: 
thanks to this property any fragment of input string enclosed within a pair of corresponding $\lessdot$ and $\gtrdot$ OP relations can be processed in parallel with other similar fragments.
However, if, say, the $+$ operator is associated to the left (or right) as in the case of the left ST of Figure~\ref{fig:ass vs flat plus} 
the parsing of a sequence of $+$ must necessarily proceed sequentially from left to right.
Conversely, if the ST has a structure like that of the right tree of Figure \ref{fig:ass vs flat plus} 
the sequence of $+$ ---whether intermixed or not with other subtrees--- can be arbitrarily split into several branches which can be parsed and compiled in parallel and, 
after that, can be joined into a unique subtree as imposed by the $\doteq$ OP relation between the corresponding extreme terminals of contiguous branches.

\subsection{Equivalence between C-OPGs and OPAs}
The equivalence is obtained by a slight modification of the analogous proof given in \cite{LonatiEtAl2015} where the additional hypothesis of $M$ being $\dot=$-acyclic was exploited.

First, we describe a procedure to build an OPA equivalent to a C-OPG. Then, we provide the converse construction.

\begin{theorem}[From C-OPGs to OPAs]
Let $(\Sigma, M)$ be an OP-alphabet. For any C-OPG defined thereon an equivalent OPA can be effectively built.\label{AOPG-OPA}
\end{theorem}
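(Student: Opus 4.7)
The plan is to generalise the construction of Lonati et al.~\cite{LonatiEtAl2015}, which built an OPA equivalent to an ordinary OPG by simulating its bottom-up parsing. The only place where the $\dot=$-acyclicity hypothesis was used there is in the assertion that each rhs has bounded length, so that positions inside a rhs form a finite set which can be embedded into the OPA's finite state set. In the cyclic setting a rhs $\alpha$ is a $^+$-O-expression and $L(\alpha)\subseteq V^*$ may be infinite; however, $L(\alpha)$ is regular, so ``positions inside $\alpha$'' can be replaced by \emph{states of a finite NFA recognising $L(\alpha)$}. This is the key adaptation.

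First I would, for every rule $A\to\alpha$ of the given C-OPG $G$, construct a standard NFA $\mathcal{N}_{A,\alpha}=(Q_{A,\alpha},V,\delta_{A,\alpha},i_{A,\alpha},F_{A,\alpha})$ over the total alphabet $V$ accepting $L(\alpha)$. The O-form of $\alpha$ guarantees that no transition sequence of $\mathcal{N}_{A,\alpha}$ reads two nonterminals in a row, which is what makes the resulting automaton operator-precedence compatible. I would then define the OPA $\mathcal{A}=(\Sigma,M,Q,I,F,\delta)$ on the same OP-alphabet $(\Sigma,M)$, with $Q$ consisting of sets (or tuples) of triples $(A,\alpha,q)$ where $q\in Q_{A,\alpha}$; these record, for every production currently ``active'', the NFA state that has been reached. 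The initial states are those listing, for each axiom $S\in\mathcal{S}$ and each rule $S\to\alpha$, the initial NFA state $i_{S,\alpha}$; the final states are those listing at least one accepting NFA state of an axiomatic rule.

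The three transition functions are then designed to mimic OP bottom-up parsing. A push on reading $a$ (when $\symb{\mathrm{top}}\lessdot a$) opens a ``fresh'' set of candidate rules whose rhs can begin with $a$, i.e.\ all $(A,\alpha,q')$ with $q'\in\delta_{A,\alpha}(i_{A,\alpha},a)$; the state being pushed on the stack records the candidate set active just before reading~$a$. A shift on $b$ (when $a\dot= b$) advances every active triple synchronously by $\delta_{A,\alpha}(q,b)$; this is exactly where cyclic $\dot=$ relations are handled, because the NFA can loop through its states arbitrarily many times. A pop (when $a\gtrdot b$) fires upon completion of a subtree: for every active triple $(A,\alpha,q)$ with $q\in F_{A,\alpha}$ we treat the reduction $A$, and then advance each triple stored in the just-popped stack symbol by $\delta_{B,\beta}(q',A)$. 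The precedence compatibility of $\mathcal{A}$ with $M$ follows from the way precedence relations of $G$ were defined and the fact that the NFAs operate within rhss of $G$.

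The correctness argument I would give is an induction on the depth of chains: for every chain $\chain{a}{x}{b}$ compatible with $M$, one proves that $x\in L_G(A)$ iff $\mathcal{A}$ has a support from some state containing $(A,\alpha,i_{A,\alpha})$ to some state containing $(A,\alpha,f)$ with $f\in F_{A,\alpha}$, and concludes $L(\mathcal{A})=L(G)$ by specialising to top-level chains $\chain{\#}{w}{\#}$. The main obstacle, and the place where I expect the argument to require care, is the bookkeeping for push and pop: a push may open many rules in parallel, and a pop must correctly match the completed nonterminal against \emph{every} rule that was active before the push; this is why the OPA's states are sets rather than single NFA states, and why the stack symbol must carry the entire candidate set. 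Once this is set up, the inductive step for composed chains goes through exactly as in the acyclic case, and the simple-chain base case is immediate from the NFA construction.
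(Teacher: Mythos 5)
Your overall strategy is essentially the paper's: the $\dot=$-acyclicity hypothesis was only needed in \cite{LonatiEtAl2015} to bound the length of a rhs, the language $L(\alpha)$ of a $^+$-O-expression is regular, so ``position inside a rhs'' can be replaced by the state of a finite automaton for $L(\alpha)$, and correctness follows by induction on chain depth. The paper realizes exactly this idea in a hand-rolled form: its states carry a prefix of a string in $\tilde P$ (each $^+$-block unfolded at most twice) and ``cut'' a block as soon as it repeats, which is nothing but an explicit finite-state recognizer for $L(\alpha)$. On this central point the two constructions differ only in presentation.

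There is, however, a concrete flaw in your transition rules, and it has nothing to do with cyclicity: they do not handle a rhs whose first symbol is a nonterminal. Take $E \to E+T$ from $G_{AE}$ and the input $n+n$. After the first $n$ is reduced by a pop, the relation $\# \lessdot +$ forces a push on $+$; the rhs being opened is $E+T$, which begins with the nonterminal $E$, so the fresh candidates must be seeded with NFA states that have \emph{already consumed} $E$ --- your rule ``all $(A,\alpha,q')$ with $q'\in\delta_{A,\alpha}(i_{A,\alpha},a)$'' produces the empty set here and the run dies. Symmetrically, your pop rule advances the triples stored in the popped stack symbol by the completed nonterminal $A$ \emph{at pop time}; this is correct only when $A$ attaches to the rhs one level up (i.e.\ the next move is a shift or another pop), and wrong when the next move is a push, in which case $A$ must instead become the leading symbol of the newly opened rhs while the outer candidates stay untouched until that rhs is itself reduced. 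Which of the two happens is decided only by the precedence relation read \emph{after} the pop, so the incorporation of $A$ must be deferred to the following push/shift/pop move. This is precisely the role of the paper's two-component states $\pair{\alpha}{\beta}$ and of the case split $\alpha\in N$ versus $\alpha\notin N$ in all three transition functions; your states need an analogous ``just reduced to $A$'' marker, and your push and shift rules a corresponding case, before the induction on chain depth can go through. The difficulty you flag (matching a completed nonterminal against all active rules) is real but is already absorbed by your set-valued states; the one above is the gap that actually needs closing.
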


\begin{proof} 

A nondeterministic OPA\footnote{Any nondeterministic OPA can be transformed into a deterministic one at the cost of  quadratic exponential increase in the size of the state space \cite{LonatiEtAl2015}.} 
$\mathcal A = \langle  \Sigma, M, Q, I,$ $F,  \delta \rangle$ from a given C-OPG $G$ with the same precedence matrix $M$ as $G$ is built in such a way that a successful computation thereof corresponds to
	building bottom-up a syntax tree of $G$: the automaton performs a push transition when it reads the first terminal of a new rhs;
	it performs a shift transition when it reads a terminal symbol inside a rhs, i.e.\ a leaf with some left sibling leaf.
	It performs a pop transition when it completes the recognition of a rhs, then it guesses (nondeterministically) the nonterminal at the lhs.
	Each state contains two pieces of information: the first component represents the prefix of the rhs under construction,
	whereas the second component is used to recover the rhs \emph{previously under construction} (see Figure~\ref{fig:OPGOPA-trees})
	whenever all rhs nested below have been completed.
  Precisely, the construction of $\mathcal A$ is defined as follows.

 \begin{figure}
\centering		
\begin{minipage}{0.3\textwidth}
\centering
\begin{tikzpicture}[scale=0.5,
triangle/.style={isosceles triangle,draw=,shape border rotate=90,isosceles triangle stretches=true, inner sep=0,yshift={0mm}},
small triangle/.style={triangle, minimum height = 5mm, minimum width = 3mm },
]
\node{$\dots$}
child{node (leftb) {$\dots$}}
child{node{$\dots$}}
child{node (rightb) {$\dots$}}
child {node {$B$}
    child {node{ $\dots$}
        child {node {$A$}
            child{node (lefta) {$\dots$}}
            child{node{$\dots$}}
            child{node (righta) {$\dots$}}
        }
        child {node {}
            edge from parent [path]
        }
    }
    child {node {}
        edge from parent [path]
    }
}
child {node {}
    edge from parent [path]
}
;
\draw[decorate,decoration={brace,mirror,raise=5pt}] (leftb.west) -- (rightb.east) node[pos=.5,below=7pt] {$\beta$};
\draw[decorate,decoration={brace,mirror,raise=5pt}] (lefta.west) -- (righta.east) node[pos=.5,below=7pt] {$\alpha$};
\end{tikzpicture}
\end{minipage}
\begin{minipage}{0.3\textwidth}
\centering
\begin{tikzpicture}[scale=0.5,
triangle/.style={isosceles triangle,draw=,shape border rotate=90,isosceles triangle stretches=true, inner sep=0,yshift={0mm}},
small triangle/.style={triangle, minimum height = 5mm, minimum width = 3mm },
]
\node {$\dots$}
child{node (leftb) {$\dots$}}
child{node{$\dots$}}
child{node (rightb) {$\dots$}}
child {node {$A$}
    child{node (lefta) {$\dots$}}
    child{node{$\dots$}}
    child{node (righta) {$\dots$}}
}
child {node {}
    edge from parent [path]
}
;
\draw[decorate,decoration={brace,mirror,raise=5pt}] (leftb.west) -- (rightb.east) node[pos=.5,below=7pt] {$\beta$};
\draw[decorate,decoration={brace,mirror,raise=5pt}] (lefta.west) -- (righta.east) node[pos=.5,below=7pt] {$\alpha$};
\end{tikzpicture}
\end{minipage}
\caption{
    When parsing $\alpha$, the prefix previously under construction is $\beta$.}
\label{fig:OPGOPA-trees}
\end{figure}
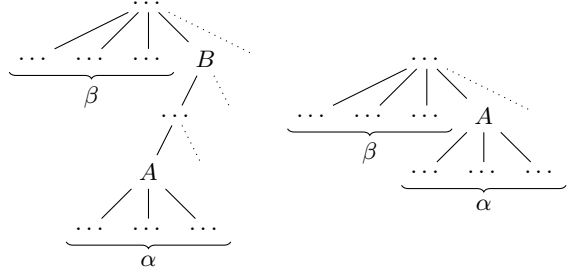

%
Let $\hat P$ be the set of rhs $\gamma$ where all $^+$ and related parentheses have been erased.
Let $\tilde{P}$ be the set of strings $\tilde{\gamma} \in V^+$ belonging to the language of some rhs $\gamma$ of $P$ 
that is inductively defined as follows:\\
if $(\eta)^+$ is a subexpression of $\gamma$ such that $\eta$ is a single string $\in V^+$ then $\tilde {\eta}= \{\eta, \eta\eta\}$;\\
if $\eta = \alpha_1 (\beta_1)^+ \alpha_2 (\beta_2)^+ \dots \alpha_n$ where $\alpha_i \in V^*$,
then $\tilde{\eta} = \{\eta_1, \eta_1\eta_1\}$
where $\eta_1 = \alpha_1 \tilde\beta_{1} \alpha_2 \tilde\beta_{2} \dots \alpha_n$.

 E.g., let $\eta$ be $(Ba(bc)^+)^+$; then $\tilde \eta = \{Babc, Babcbc, BabcBabc, BabcbcBabc,\\ BabcBabcbc, BabcbcBabcbc \}$
 and $ \hat \eta = \{Babc\}$
 
 Let
$
\pref = \{ \alpha \in V^* \Sigma \mid \exists A \to \eta \in P \land 
\exists \alpha, \beta \mid \alpha \beta \in \tilde\eta\}
$
 be the set of prefixes, ending with a terminal symbol, of strings $\in \tilde{P}$;
	define $\mathbb Q = \{ \varepsilon \} \cup \pref \cup N$, $Q = \mathbb Q \times (\{ \varepsilon \} \cup \pref)$,
	$I = \{ \pair{\varepsilon}{\varepsilon} \}$,
	and $F = S \times \{\varepsilon\} \cup \{ \pair{\varepsilon}{\varepsilon} \text{  if  } \varepsilon \in L(G)\}$.
	Note that $|\mathbb{Q}| = 1 + |\pref| + |N|$ is $O(m^h)$ where $m$ is the maximum length of the rhs in $P$, and $h$ is the maximum nesting level of $^+$ operators in rhs; therefore $|Q|$ is $O(m^{2h})$.
	
The transition functions are defined by the following formulas, for
$a \in \Sigma$ and $\alpha, \alpha_1, \alpha_2 \in \mathbb Q$, $\beta, \beta_1, \beta_2 \in \{ \varepsilon \} \cup \pref$, and where for any expression $\xi$, $\bar\xi$ is obtained from $\xi$ by erasing parentheses and $^+$ operators:
\begin{itemize}
\item \(
	\delta_{\text{shift}}( \pair{\alpha}{\beta}, a) \ni 
	\begin{cases}
        \text{if } \alpha \not\in N:
        &
        \begin{cases}
            \pair{\bar{\eta} \bar{\zeta}}{\beta} &
            \text{if}
            \left(
            \begin{array}{l}
            \exists A \to \gamma \mid
            \gamma = \eta (\zeta)^+ \theta \land\\
            \alpha a = \bar{\eta} \bar{\zeta} \bar{\zeta},  \alpha a \bar{\theta} \in L(\gamma) \cap \tilde P
            \end{array}
            \right) \\
            \pair{\alpha a}{\beta} & \text{otherwise}
        \end{cases}
        \\
        \text{if } \alpha \in N:
        &
        \begin{cases}
            \pair{\bar{\eta} \bar{\zeta}}{\beta} &
            \text{if}
            \left(
            \begin{array}{l}
            \exists A \to \gamma \mid
            \gamma = \eta (\zeta)^+ \theta \land\\
            \beta \alpha a = \bar{\eta} \bar{\zeta} \bar{\zeta},  \alpha a \bar{\theta} \in L(\gamma) \cap \tilde P
            \end{array}
            \right) \\
            \pair{\beta \alpha a}{\beta} & \text{otherwise}
        \end{cases}
        \\
    \end{cases}
\)


  
  

\item
    $\delta_{\text{push}} ( \pair{\alpha}{\beta}, a) \ni
    \left\{\begin{array}{ll}
    \pair{a}{\alpha} & \text{if } \alpha \not\in N\\
    \pair {\alpha a}{\beta} & \text{if } \alpha \in N
    \end{array}\right.$
		%
\item 
	$\delta_{\text{pop}}(\pair{\alpha_1}{\beta_1}, \pair{\alpha_2}{\beta_2}) \ni \pair{A}{\gamma}$
\\
    for every $A$ such that
	$\left[\begin{array}{ll}
		\text{if } \alpha_1 \notin N: 
            A \to \alpha \in P \land \alpha_1 \in L(\alpha) \cap \hat{P}
    & \\
        \text{if } \alpha_1 \in N:
		      A \to \delta \in P \land \beta_1 \alpha_1 \in L(\delta) \cap \hat P
    \end{array}\right.$
\\
	and $\gamma = \left\{\begin{array}{ll}
	\alpha_2 & \text{if } \alpha_2 \notin N\\
	\beta_2 & \text{if } \alpha_2 \in N.
	\end{array}\right.$
\end{itemize}

The states reached by push and shift transitions have the first component in $\mathbb P$. 
If state $\pair{\alpha}{\beta}$ is reached after a push transition, then
$\alpha$ is the prefix of the rhs (deprived of the $^+$ operators) that is currently under construction and
$\beta$ is the prefix previously under construction;
in this case $\alpha$ is either a terminal symbol or a nonterminal followed by a terminal one. 
	
If the state is reached after a shift transition,
and the $\alpha$ component of the previous state was not a single nonterminal,
then the new $\alpha$ is the concatenation of the first component of the previous state with the read character.
If, instead, the $\alpha$ component of the previous state was a single nonterninal
---which was produced by a pop transition---
then the new $\alpha$ also includes the previous $\beta$ (see Figure \ref{fig:OPGOPA-trees}) and $\beta$ is not changed from the previous state. 
However, if the new $\alpha$ becomes such that a suffix thereof is a double occurrence of a string $\zeta \in L((\zeta)^+)$ ---hence $\alpha \in \pref$--- then the second occurrence of $\zeta$ is cut from the new $\alpha$, which therefore becomes a prefix of an element of $\hat{P}$.


The states reached by a pop transition have the first component in $N$: 
if $\pair{A}{\gamma}$ is such a state, then $A$ is the corresponding lhs, and $\gamma$ is the prefix previously under construction.

For instance, imagine that a C-OPG contains the rules $A \to (Ba(bc)^+)^+ a$ and $B \to h$ 
and that the corresponding OPA $\mathcal A$ parses the string $habcbchabca$: 
after scanning the prefix $habcb$ $\mathcal A$ has reduced $h$ to $B$ and has $Babcb$ as the first component of its state; 
after reading the new $c$ it recognizes that the suffix of the first state component would become a second instance of $bc$ belonging to $(bc)^+$; 
thus, it goes back to $Babc$. 
Then, it proceeds with a new reduction of $h$ to $B$ and, 
when reading with a shift the second $a$ appends $Ba$ to its current $\beta$ which was produced by the previous pop so that the new $\alpha$ becomes $BabcBa$; 
after shifting $b$ it reads $c$ and realizes that its new $\alpha$ would become $BabcBabc$, i.e., an element of $(Ba(bc)^+)^+$ and therefore ``cuts'' it to the single instance thereof, i.e., $Babc$. Finally, after having shifted the last $a$ it is ready for the last pop.
 
Notice that the result of $\delta_{\text{shift}}$ and $\delta_{\text{push}}$ is a singleton, 
whereas $\delta_{\text{pop}}$ may produce several states, in case of repeated rhs.
Thus, if $G$ is BD, the corresponding $\mathcal A$ is deterministic.

The equivalence between $G$ and $\mathcal A$ derives from the following Lemmas~\ref{lemma:compToDer} and \ref{lemma:derToComp}, 
when
$\beta=\gamma=\varepsilon$, $\Pi = \bot$ and $A$ is an axiom.
Their statements are identical to Lemmas 3.2 and 3.3 given in \cite{LonatiEtAl2015} for the original construction applied to non-C-OPGs.
\end{proof}

\begin{lemma}
		\label{lemma:compToDer}
		Let $x$ be the body of a chain
		and $\beta, \gamma \in \mathbb P\cup \{\varepsilon\}$.
		Then 
		$\pair{\beta}{\gamma}  \ourpath x q$
		implies the existence of $A \in N$ such that  $A \stackrel \ast \Rightarrow x \ \text{ in } G$ and $q = \pair A {\beta}$.
	\end{lemma}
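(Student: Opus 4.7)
The plan is to prove the lemma by structural induction on the depth of the chain whose body is $x$, following the template of the original proof for non-cyclic OPGs in \cite{LonatiEtAl2015} but accounting for the new machinery: the $\alpha \in N$ clauses of $\delta_{\text{push}}$ and $\delta_{\text{shift}}$, and the truncation clause in $\delta_{\text{shift}}$ that closes a $^+$-cycle upon detecting a second copy of a $\zeta$-body.

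For the base case the chain is simple and $x = a_1 \ldots a_n$ consists only of terminals. Since $\beta \in \pref \cup \{\varepsilon\}$ we have $\beta \notin N$, so the initial push produces $q_1 = \pair{a_1}{\beta}$ and the successive shifts yield $q_i = \pair{\alpha_i}{\beta}$. I would maintain the invariant that $\alpha_i \in \pref$, and moreover that the full untruncated reading $a_1 \ldots a_i$ belongs to $L(\eta)$ for some prefix $\eta$ of a rhs of $P$ whose bare form is precisely $\alpha_i$. The final pop, labelled $q_0 = \pair{\beta}{\gamma}$, triggers the $\alpha_1 \notin N$ branch of $\delta_{\text{pop}}$ and returns $\pair{A}{\beta}$ for any $A$ with $A \to \alpha \in P$ and $\alpha_n \in L(\alpha) \cap \hat P$; by the invariant, the full $x = a_1 \ldots a_n$ lies in $L(\alpha)$, so $A \Rightarrow x$ holds by the immediate-derivation clause of Definition~\ref{def:A-OPG}.

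For the inductive step the chain is composed, $x = x_0 a_1 x_1 \ldots a_n x_n$, and each non-empty $x_i$ is the body of a strictly shorter chain. I would trace the support piece by piece, applying the IH to each $x_i$. Starting at $q_0 = \pair{\beta}{\gamma}$, the sub-run on $x_0$ reaches $q'_0 = \pair{A_0}{\beta}$ with $A_0 \derives x_0$, or $q'_0 = q_0$ if $x_0 = \varepsilon$. The push of $a_1$ uses one of the two clauses of $\delta_{\text{push}}$ depending on whether $x_0$ is empty, but in both cases $q_1$ has second component $\beta$ and first component in $\pref$. The sub-run on $x_1$ reaches $\pair{A_1}{\text{first component of } q_1}$ by the IH; the subsequent shift of $a_2$ falls in the $\alpha \in N$ branch of $\delta_{\text{shift}}$, reassembling the outer rhs prefix by concatenation with $A_1 a_2$, up to a possible truncation. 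Iterating, the state $q'_n$ stores, across its two components, (a truncation of) the sequence $A_0 a_1 A_1 \ldots a_n A_n$; the pop labelled $q'_0$ then fires a rule $A \to \alpha$ with this sequence in $L(\alpha)$ (by the invariant suitably extended to the composed case) and its bare form in $\hat P$. Hence $A \Rightarrow A_0 a_1 A_1 \ldots a_n A_n$, and combining with the IH-provided $A_i \derives x_i$ yields $A \derives x$. Inspection of the two sub-cases of $\delta_{\text{pop}}$ applied at $q'_0$ shows that the second component of $q_{n+1}$ equals $\beta$ regardless of whether $x_0$ is empty, so $q = \pair{A}{\beta}$ as required.

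The principal obstacle is formulating a clean invariant that tracks the relationship between the truncated prefix actually stored in the first component and the longer untruncated sequence read since the last push, together with its analogue for the composed case in which part of the outer prefix is carried over in the second component. Without this bookkeeping the crucial passage $A \Rightarrow A_0 a_1 A_1 \ldots a_n A_n$ cannot be justified, because the pop by itself only certifies membership in $L(\alpha) \cap \hat P$ of the (possibly truncated) first component, not of the full read sequence that the derivation must reproduce. The remaining work is a case split on whether each $x_i$ is empty and whether a truncation fires at each shift, which is routine but tedious.
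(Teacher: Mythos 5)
Your proposal follows essentially the same route as the paper's proof: an induction on the depth of the chain, with the crucial observation that the truncation performed by $\delta_{\text{shift}}$ on a second occurrence of a string $\zeta$ under a $^+$ operator closes a loop, so the (possibly truncated) first state component still certifies that the full untruncated read sequence lies in $L(\alpha)$ when the pop fires. The paper only sketches this argument by deferring to the corresponding lemmas of \cite{LonatiEtAl2015} and highlighting the loop-closing behavior, so your explicit formulation of the invariant relating the stored truncated prefix to the untruncated sequence is, if anything, a more detailed account of the same argument.
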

 \begin{lemma}
		\label{lemma:derToComp}
		Let $x$ be the body of a chain and $A \in N$.
		Then, 
		$
		A \stackrel \ast \Rightarrow x \ \text{ in } G 
		$
		implies 
		$
		\pair{\beta}{\gamma}  
		\ourpath x {\pair A {\beta}} 
		$
		for every $\beta, \gamma \in \mathbb P \cup \{\varepsilon\}$.
	\end{lemma}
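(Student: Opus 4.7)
The plan is to proceed by induction on the depth of the chain whose body is $x$, mutually with the previous Lemma \ref{lemma:compToDer}. I focus on the steps specific to Lemma~\ref{lemma:derToComp}. Throughout, I fix arbitrary $\beta,\gamma \in \mathbb{P}\cup\{\varepsilon\}$ (so $\beta\notin N$) as the starting state components, and I use that any derivation $A\stackrel{\ast}{\Rightarrow} x$ factors as $A\Rightarrow \alpha\stackrel{\ast}{\Rightarrow} x$ where, by Definition~\ref{def:A-OPG}, $\alpha\in L(\mathrm{rhs}(A))$ is a string of the form $\alpha_0 A_1 \alpha_1 A_2\dots A_k\alpha_k$ with $A_i\in N$ and $\alpha_i\in\Sigma^*$; because the grammar is in operator form, the terminals of $\alpha_0\dots\alpha_k$, read left to right, constitute exactly the spine $a_1\doteq a_2\doteq\dots\doteq a_n$ of the chain body, and each non-empty gap $x_i$ in $x=x_0 a_1 x_1\dots a_n x_n$ equals what some nonterminal $A_j$ derives.

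\textbf{Base case (depth 1).} Here $x=a_1\dots a_n$ is purely terminal, so $A\Rightarrow x$ in one step and $x\in L(\mathrm{rhs}(A))\cap\tilde P$. I construct the support explicitly: starting from $\pair{\beta}{\gamma}$ with $\beta\notin N$, the move on $a_1$ is a push producing $\pair{a_1}{\beta}$; then I apply shifts for $a_2,\dots,a_n$. Each shift either appends to the first component (normal case, keeping the running prefix inside $\mathbb{P}$) or, when the newly appended symbol would complete a pattern $\bar\eta\bar\zeta\bar\zeta$ coming from a subexpression $(\zeta)^+$ of $\mathrm{rhs}(A)$, collapses to $\bar\eta\bar\zeta$. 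Using the definition of $\tilde P$ (which caps iterations at two), a straightforward induction on the number of shifts shows that the running first component always stays in $\mathbb{P}$ and represents some prefix of a member of $\hat P\cap L(\mathrm{rhs}(A))$ compatible with the portion read. After the last shift the first component belongs to $L(\mathrm{rhs}(A))\cap\hat P$, so the pop clause with $\alpha_1\notin N$ applies and yields $\pair{A}{\beta}$, as required.

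\textbf{Inductive step.} Assume the claim holds for all chains of depth $<d$, and let $x$ be the body of a chain of depth $d$ with decomposition $x=x_0 a_1 x_1\dots a_n x_n$. From the derivation $A\Rightarrow\alpha\stackrel{\ast}{\Rightarrow} x$ I identify, for each non-empty $x_i$, the corresponding nonterminal $A_i$ in $\alpha$ with $A_i\stackrel{\ast}{\Rightarrow} x_i$; each such $x_i$ is the body of a chain of strictly smaller depth. I walk the automaton along the spine: whenever $x_i\neq\varepsilon$, the IH delivers a sub-support $\pair{\beta'}{\gamma'}\ourpath{x_i}{\pair{A_i}{\beta'}}$ that fits between the moves on $a_i$ and $a_{i+1}$; immediately after this sub-computation the automaton is in a state whose first component is the nonterminal $A_i$, so the next transition on $a_{i+1}$ (push if the intermediate top of stack is at $\lessdot a_{i+1}$, shift if at $\doteq a_{i+1}$) follows the branch of $\delta_{\mathrm{push}}/\delta_{\mathrm{shift}}$ designed for $\alpha_1\in N$, whose second component carries forward the prefix under construction. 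Careful bookkeeping along these alternations shows that the accumulated first component coincides with $\overline{\alpha_0 A_1\alpha_1\dots A_i\alpha_i}$ possibly collapsed by the $(\zeta)^+$-shortening rule. The argument mirrors the base case, but applied to the spine of terminals interleaved with the $A_i$ symbols: at the end of processing $a_n x_n$, the running first component sits in $L(\mathrm{rhs}(A))\cap\hat P$ with the $\alpha_1\in N$ clause of $\delta_{\mathrm{pop}}$ (if $x_n\neq\varepsilon$) or the $\alpha_1\notin N$ clause (if $x_n=\varepsilon$), producing the desired $\pair{A}{\beta}$.

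\textbf{Main obstacle.} The novelty compared with the analogous lemma in \cite{LonatiEtAl2015} is that the first state component is no longer simply a prefix of a bounded-length rhs: it is a prefix of some element of $\tilde P$ that may have undergone several shortenings due to the $^+$ operators. The crux is an invariant to be maintained along every push/shift sequence that processes a segment $a_i x_i a_{i+1}\dots$ of the spine, stating that the current first component is precisely the projection into $\mathbb{P}$ of the portion of $\alpha$ consumed so far, modulo collapses of completed $(\zeta)^+$ patterns. Establishing this invariant requires a careful case analysis of $\delta_{\mathrm{shift}}$ that distinguishes whether the previous state's first component is a nonterminal (from a just-finished pop) or a prefix in $\mathbb{P}$, and in each case verifies that the collapsing condition $\alpha a=\bar\eta\bar\zeta\bar\zeta$ fires exactly when the consumed input has completed a further iteration of a $^+$-block of $\mathrm{rhs}(A)$. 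Once this bookkeeping invariant is in place, the final pop clause applies by construction and the two lemmas close.
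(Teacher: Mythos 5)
Your plan matches the paper's own argument: the paper likewise proves this lemma by induction on chain depth, mutually with Lemma~\ref{lemma:compToDer}, following the construction of \cite{LonatiEtAl2015}, and identifies as the only new ingredient exactly the point you isolate as the ``main obstacle'' --- the collapse of a completed second iteration of a $(\zeta)^+$ block, which closes a loop in the state space so that unbounded simple chains can be scanned. Your tracing of the push/shift/pop clauses (including the $\alpha\in N$ branches after a sub-support returns a nonterminal) is consistent with the automaton as defined, so the proposal is essentially the paper's proof with more bookkeeping spelled out.
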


The proof of the above lemmas is based on a natural induction on the depth $h$ of the chains visited by the OPA and is not repeated here. 
We simply point out the only key difference due to the fact that simple chains generated by C-OPGs can have unbounded length.

During a series of shift transitions scanning a simple chain the $\beta$ component of the state remains unchanged,
whereas the $\alpha$ component at each transition appends the read character ---always under the constraint that it belongs to $\pref$---
up to the point where the last append would repeat a suffix of  $\alpha$ ---say a string $y$ in case of simple chains--- identically; 
if $y$ occurs under the $^+$ operator of a rhs of which $\alpha$ is  a prefix then the second occurrence of $y$ is 
erased 
and the state is the same as after reading its first occurrence, 
thus closing a loop in the same way as it happens with finite state machines.
As a consequence the OPA can repeat the loop reading $y$ any number of times just as the C-OPG can generate, in an immediate derivation, any number of occurrences of $y$ thanks to the $^+$ operator.

The converse statement that if $A \xLongrightarrow[]{} x$ then $
		\pair{\beta}{\gamma}  
		\ourpath x {\pair A {\beta}} 
		$
		for every $\beta, \gamma \in \mathbb P \cup \{\varepsilon\}$
follows directly from the construction of $\mathcal A$.

The generalization of the above reasoning to the case of composed chains perfectly parallels the induction step of the corresponding lemmas of \cite{LonatiEtAl2015}.

\begin{example}
Figure \ref{fig:run cyclic} displays a run of the OPA obtained from the C-OPG of Figure \ref{fig:cyclic AE} accepting the sentence $n+n+n/n/n+n+n$.

\begin{figure}[bt]
\centering
\begin{scriptsize}
\[
		\begin{array}{|l|c|r|}
		\hline
		\text{stack}   											 
		&  \text{state}
		&  \text{current input}\\\hline
  \bot & \pair{\varepsilon} {\varepsilon}  &  n + n + n / n /  n + n +n \# \\ \hline 
  \bot \tstack{n} {\pair{\varepsilon} {\varepsilon}}   &  \pair{n} {\varepsilon}
   &  + n + n / n /  n + n +n \# \\ \hline 
   \bot   &  \pair{T} {\varepsilon}
   &  + n + n / n /  n + n +n \# \\ \hline 
   \bot \tstack{+} {\pair{T} {\varepsilon}}   &  \pair{T+} {\varepsilon}
   &  n + n / n /  n + n +n \# \\\hline 
  \bot \tstack{+} {\pair{T} {\varepsilon}}
  \tstack{n} {\pair{T+} {\varepsilon}}
  &  \pair{n} {T+}
   &  + n / n /  n + n +n \# \\\hline 
   \bot \tstack{+} {\pair{T} {\varepsilon}}
  &  \pair{T} {T+}
   &  + n / n /  n + n +n \# \\\hline 
    \bot \tstack{+} {\pair{T} {\varepsilon}}
  &  \bf{\pair{T+} {T+}}
   &  n / n /  n + n +n \# \\\hline 
     \bot \tstack{+} {\pair{T} {\varepsilon}}
     \tstack{n} {\pair{T+} {T+}}
  &  \pair{n} {T+}
   &  / n /  n + n +n \# \\\hline 
    \bot \tstack{+} {\pair{T} {\varepsilon}}
  &  \pair{D} {T+}
   &  / n /  n + n +n \# \\\hline 
    \bot \tstack{+} {\pair{T} {\varepsilon}}
    \tstack{/} {\pair{D} {T+}}
  &  \pair{D/} {T+}
   &  n /  n + n +n \# \\\hline 
    \bot \tstack{+} {\pair{T} {\varepsilon}}
    \tstack{/} {\pair{D} {T+}}
    \tstack{n} {\pair{/} {D}}
  &  \pair{n} {D/}
   &  /  n + n +n \# \\\hline 
   \bot \tstack{+} {\pair{T} {\varepsilon}}
    \tstack{/} {\pair{D} {T+}}
  &  \pair{E} {D/}
   &  /  n + n +n \# \\\hline
    \bot \tstack{+} {\pair{T} {\varepsilon}}
  &  \pair{D} {T+}
   &  /  n + n +n \# \\\hline
     \bot \tstack{+} {\pair{T} {\varepsilon}}
     \tstack{/} {\pair{D} {T+}}
  &  \pair{D/} {T+}
   &  n + n +n \# \\\hline
     \bot \tstack{+} {\pair{T} {\varepsilon}}
     \tstack{/} {\pair{D} {T+}}
     \tstack{n} {\pair{D/} {T+}}
  &  \pair{n} {D/}
   &  + n +n \# \\\hline
   \bot \tstack{+} {\pair{T} {\varepsilon}}
     \tstack{/} {\pair{D} {T+}}
  &  \pair{E} {D/}
   &  + n +n \# \\\hline
    \bot \tstack{+} {\pair{T} {\varepsilon}}
  &  \pair{T} {T+}
   &  + n +n \# \\\hline
    \bot \tstack{+} {\pair{T} {\varepsilon}}
  &  \bf{\pair{T+} {T+}}
   & n +n \# \\\hline
    \bot \tstack{+} {\pair{T} {\varepsilon}}
    \tstack{n} {\pair{T+} {T+}}
  &  \pair{n} {T+}
   & +n \# \\\hline
    \bot \tstack{+} {\pair{T} {\varepsilon}}
  &  \pair{T} {T+}
   & +n \# \\\hline
    \bot \tstack{+} {\pair{T} {\varepsilon}}
  &  \bf{\pair{T+} {T+}}
   & n \# \\\hline
     \bot \tstack{+} {\pair{T} {\varepsilon}}
      \tstack{n} {\pair{T+} {T+}}
  &  \pair{n} {T+}
   &  \# \\\hline
     \bot \tstack{+} {\pair{T} {\varepsilon}}
  &  \pair{T} {T+}
   &  \# \\\hline
    \bot 
  &  \pair{P} {\varepsilon}
   &  \# \\\hline
    \end{array}
 \]
\end{scriptsize}
\caption{A run of the OPA built from the C-OPG of Figure \ref{fig:cyclic AE} accepting the sentence $n+n+n/n/n+n+n$.
The states truncated by erasing a repeated suffix $\zeta$ occurring under the scope of a $^+$ operator are emphasized in boldface.}
\label{fig:run cyclic}
\end{figure}    
\end{example}\label{ex:sample computation}

The construction of a C-OPG equivalent to a given OPA is far simpler than the converse one,  thanks to the explicit structure associated to words by the precedence matrix. 
The key difference w.r.t. the analogous construction given in \cite{LonatiEtAl2015} is that even simple chains can have unbounded length due to the fact that the $\doteq $ relation may be circular.

First, in analogy with the definition of $\tilde P$, we define \emph{essential supports} for both simple and composed chains, 
as those supports where possible cyclic behaviors of the OPA along a sequence of terminals ---whether with interposing nonterminals or not--- occur exactly twice.

\begin{definition}\label{def:essential supports}
(Essential chain supports)
An essential support of a simple chain $\chain {a_0} {a_1 a_2 \dots a_n} {a_{n+1}}$
	is any path in $\mathcal A$ of the form
	\begin{equation}
	q_0
	\va{a_1}{q_1}
	\shift{}{}
	\dots
	\shift{}q_{n-1}
	\shift{a_{n}}{q_n}
	\flush{q_0} {q_{n+1}}
	\end{equation}

where any cycle $q_i a_{i1} ...a_{ik}  q_i $ is repeated exactly twice.

Essential supports for composed chains are defined similarly.
\end{definition}

For instance, with reference to the OPA built from the C-OPG of Figure \ref{fig:cyclic AE} an essential support of the chain 
$\chain {\#} {n+n+n+n+n} {\#}$ 
is:
\[
\begin{array}{l}
	\pair{\varepsilon} {\varepsilon}
 \ourpath{n}{\pair{T} {\varepsilon}}
	\va{+}\pair{T+} {\varepsilon}
	 \ourpath{n}{\pair{T} {T+}}
  \shift{+} {\pair{T+} {T+}}
   \ourpath{n}{\pair{T} {T+}}
  \shift{+} \\ 
  {\pair{T+} {T+}}
   \ourpath{n}{\pair{T} {T+}} \flush{\pair{T} {\varepsilon}} {\pair{P} {\varepsilon}}
   \end{array}
   \]

\begin{lemma}\label{lemma:essential support}
The essential supports of simple chains of any OPA have an effectively computable bounded length.
\end{lemma}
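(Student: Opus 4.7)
The plan is to combine the finiteness of the state set $Q$ of $\mathcal{A}$ with the cycle restriction imposed by Definition~\ref{def:essential supports} to derive an effectively computable length bound via a pigeonhole argument.

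First, I would consider the sequence of states $q_0, q_1, \dots, q_n$ visited along the push and shift transitions of an essential support for a simple chain $\chain{a_0}{a_1 a_2 \dots a_n}{a_{n+1}}$. By the definition of support, this sequence forms a walk in the transition graph of $\mathcal{A}$, namely $q_0 \va{a_1} q_1 \shift{a_2} q_2 \dots \shift{a_n} q_n$, followed by a single pop transition.

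Second, I would apply Definition~\ref{def:essential supports}: no cycle in this walk may be traversed more than twice. Since $\mathcal{A}$ has only $|Q|$ states, it contains only finitely many simple cycles, and the total length contributed by cycle traversals to the walk is at most twice the sum of the lengths of all simple cycles of $\mathcal{A}$; the remaining acyclic skeleton of the walk has length at most $|Q|$. This yields a bound $n \le N(|Q|)$ for an explicit, effectively computable function $N$. Adding the constant overhead of the initial push and the final pop gives the claimed bound on the essential support.

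The main obstacle will be making the walk decomposition and the cycle-multiplicity argument rigorous in the face of overlapping or nested cycles. A convenient route is the contrapositive: by iterated pigeonhole on the finite set $Q$, any walk of length exceeding the prescribed bound must exhibit some simple cycle appearing three or more times, contradicting essentiality. Since both $|Q|$ and the collection of simple cycles of $\mathcal{A}$ are computable from the description of the OPA, the resulting length bound is effective, as required.
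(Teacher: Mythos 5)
The paper states Lemma~\ref{lemma:essential support} without any proof, treating it as an immediate consequence of Definition~\ref{def:essential supports} together with the finiteness of $Q$; your argument supplies exactly the justification the authors leave implicit, so there is no competing ``paper proof'' to diverge from. Your core reasoning is sound: an essential support is a walk $q_0 \va{a_1} q_1 \shift{a_2} \cdots \shift{a_n} q_n$ (plus one pop) in a finite labeled transition graph, that graph has a finite, computable set of simple cycles, and the standard excision decomposition of a walk into a simple path of length less than $|Q|$ plus simple cycles shows that if $n$ exceeds $|Q|\cdot(2C+1)$, where $C$ is the number of distinct simple cycles, then some simple cycle occurs at least three times, contradicting essentiality; all quantities are computable from the OPA, so the bound is effective. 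The one point that deserves care --- and it traces back to the informality of Definition~\ref{def:essential supports} rather than to a flaw in your plan --- is the counting convention behind ``any cycle $q_i a_{i1}\dots a_{ik} q_i$ is repeated exactly twice'': the construction in Theorem~\ref{OPA-AOPG} suggests this means \emph{contiguous} doubled occurrences $\alpha\alpha$, whereas your contrapositive detects a cycle occurring three times in the excision decomposition, and such occurrences need not be contiguous in the original walk (e.g., interleavings of two different cycles at the same state). To make the argument airtight you should fix one reading of ``repeated'' --- counting all occurrences of a labeled simple cycle as a sub-walk is the cleanest, and it makes your pigeonhole immediate --- and note that under any such reasonable disambiguation the same effective bound holds.
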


\begin{theorem}[From OPAs to C-OPGs]
Let $(\Sigma, M)$ be an OP-alphabet. For any OPA defined thereon an equivalent C-OPG can be effectively built.\label{OPA-AOPG}
\end{theorem}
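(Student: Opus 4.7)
The plan is to adapt the construction from \cite{LonatiEtAl2015} that builds a (non-cyclic) OPG from an OPA, with the essential modification that cycles in the automaton's computation must be encoded as $^+$-subexpressions in the grammar right-hand sides. Since without $\dot=$-acyclicity the set of all simple chain supports is in general infinite, the construction is based on \emph{essential} supports (Definition~\ref{def:essential supports}), which by Lemma~\ref{lemma:essential support} have bounded length, so only finitely many of them exist.

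First, I would take as nonterminals of the grammar the pairs $\langle q, q' \rangle$ with $q, q' \in Q$, and as axioms all $\langle q_I, q_F \rangle$ with $q_I \in I$ and $q_F \in F$. For every essential support of a simple chain $\chain{a_0}{a_1 a_2 \dots a_n}{a_{n+1}}$ of the form
\[
q_0 \va{a_1} q_1 \shift{a_2} \dots \shift{a_n} q_n \flush{q_0} q_{n+1},
\]
I would introduce a production $\langle q_0, q_{n+1}\rangle \to a_1 a_2 \dots a_n$, where every sub-path $q_i \shift{a_{i+1}} \dots \shift{a_{i+k}} q_i$ that, by essentiality, closes a loop traversed exactly twice is replaced in the right-hand side by $(a_{i+1} \dots a_{i+k})^+$. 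The construction is extended to essential supports of composed chains by replacing each sub-support that corresponds to an inner chain $\chain{a_i}{x_i}{a_{i+1}}$ (i.e., a push from some state $q_i$ followed by the visit of $x_i$ and a pop into some state $q'_i$) with the nonterminal $\langle q_i, q'_i \rangle$, again collapsing doubly-traversed cycles into $^+$-subexpressions. Since essential supports have bounded length, only finitely many productions are generated, and the construction produces a $^+$-O-expression for each right-hand side.

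I would then check that the resulting C-OG has an OPM contained in $M$ (this is immediate, because every pair of consecutive terminals occurring with some nonterminal between them inside a right-hand side is, by construction, in $\dot=$ relation in $M$; the $\lessdot$ and $\gtrdot$ relations are inherited analogously from the OPA's push/pop moves), so it is a genuine C-OPG. The equivalence $L(G) = L(\mathcal{A})$ would follow from two lemmas structurally analogous to Lemmas~\ref{lemma:compToDer} and \ref{lemma:derToComp}, proved by induction on the depth of chains. The main obstacle, and the place where this proof departs from \cite{LonatiEtAl2015}, is showing that \emph{every} support of a chain ---not only the essential ones--- can be factored as an essential support in which each inner cycle is traversed an arbitrary number $\ge 1$ of times, and, symmetrically, that every derivation that instantiates a $^+$-subexpression $m$ times corresponds to unrolling a cycle $m$ times along a support; once this cycle-iteration correspondence is in place, the induction on chain depth goes through with the same outer structure as in the $\dot=$-acyclic case.
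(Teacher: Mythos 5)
Your overall strategy is exactly the paper's: restrict to \emph{essential} supports (finitely many, by Lemma~\ref{lemma:essential support}), collapse each doubly-traversed cycle into a $(\cdot)^+$ subexpression, replace inner chains of composed supports by nonterminals, and prove equivalence by the cycle-unrolling correspondence plus induction on chain depth. The one step that would fail as written is your choice of nonterminal alphabet. You use state pairs $\langle q,q'\rangle$, whereas the construction (here and in the non-cyclic original it adapts) uses $4$-tuples $\nont{a}{q}{q'}{b}$ recording also the terminal \emph{context} of the chain. This context is not decoration: a single pair of states $(q,q')$ may carry supports of chains $\chain{a}{x}{b}$ and $\chain{a'}{x'}{b'}$ with different contexts, and merging them into one nonterminal $\Lambda$ makes $\mathcal{L}(\Lambda)$ and $\mathcal{R}(\Lambda)$ the union over all those contexts. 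Consequently, in a rule $\cdots a_i \Lambda a_{i+1}\cdots$ the induced relations $a_i\lessdot c$ for $c\in\mathcal{L}(\Lambda)$ and $c'\gtrdot a_{i+1}$ for $c'\in\mathcal{R}(\Lambda)$ need not hold in $M$ for the terminals contributed by foreign contexts, so the claim you call ``immediate'' --- that the resulting grammar's OPM is contained in $M$ and conflict-free --- can be false; and even when no conflict arises, $\Lambda$ can derive bodies $x'$ that are chain bodies only in the context $(a',b')$, so substituting them between $a_i$ and $a_{i+1}$ overgenerates beyond $L(\mathcal{A})$.

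The same conflation corrupts your axiom set: declaring \emph{every} $\langle q_I,q_F\rangle$ with $q_I\in I$, $q_F\in F$ an axiom promotes to sentencehood the bodies of inner chains that happen to be supported between an initial and a final state under a context other than $(\#,\#)$. The paper instead puts $\nont{\#}{q_0}{q_{n+1}}{\#}$ into $S$ only when it arises from a support of a chain with delimiters as context. Both problems disappear once you index nonterminals by $\Sigma\times Q\times Q\times\Sigma$; with that repair (and the $\varepsilon$-rule for the case $\varepsilon\in L(\mathcal{A})$, which you omit), your argument coincides with the paper's, and your explicit identification of the cycle-iteration correspondence as the crux of the equivalence proof is in fact more detailed than the paper, which leaves that verification to the reader.
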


\begin{proof}
		Given an \opa \ $\mathcal A = \langle \Sigma,M, Q, I, F, \delta \rangle $, 
		we show how to build an equivalent \opg \ $G$ having operator precedence matrix M. The equivalence between $\mathcal A$ and $G$ should then be rather obvious.
		
		$G$'s nonterminals are the 4-tuples $(a, q, p, b) \in \Sigma \times Q \times Q \times \Sigma$, written as $\nont apqb$.
		$G$'s rules are built as follows:
		\begin{itemize}
			\item 
			for every essential support 
			of a simple chain, the rule 
			\[
			\nont {a_0}{q_0}{q_{n+1}}{a_{n+1}}\longrightarrow a_1 a_2 \dots a_n\ ;
			\]
			
   where every double sequence 
   $a_{i1} ...a_{ik} a_{i1}...a_{ik}$ is recursively replaced by $(a_{i1} ...a_{ik})^+$ 
   by proceeding from the innermost cycles to the outermost ones,
   is in $P$; 
   furthermore, if $a_0 = a_{n+1} = \#$, $q_0$ is initial, and $q_{n+1}$ is final, then
			$\nont {\#}{q_0}{q_{n+1}}{\#}$ is in $S$;
			\\
			\item 
			for every essential support of a composed chain,
			add the rule 
			\[
			\nont {a_0}{q_0}{q_{n+1}}{a_{n+1}} \longrightarrow \Lambda_0 a_1 \Lambda_1 a_2 \dots a_n \Lambda_n \ ;
			\]
			where, for every $i = 0,1, \dots, n$,
			$\Lambda_i = \nont {a_i}{q_i}{q'_i}{a_{i+1}}$ if $x_i \neq \varepsilon$ and $\Lambda_i =
			\varepsilon$ otherwise, by replacing double cyclic sequences $\alpha_i \alpha_i$ with $(\alpha_i)^+ $ in the same way as for simple chains;  furthermore, if $a_0 = a_{n+1} = \#$, $q_0$ is initial, and $q_{n+1}$ is final, then add
			$\nont {\#}{q_0}{q_{n+1}}{\#}$ to $S$,
			and, if $\varepsilon$ is accepted by $\mathcal{A}$, add $A \to \varepsilon$, $A$ being a new axiom not otherwise occurring in any other rule.
		\end{itemize}

		Notice that the above construction is effective thanks to Lemma \ref{lemma:essential support} and to the fact that subchains of composed chains are replaced by nonterminals $\Lambda_i$.
		\hfill\end{proof}

  \noindent \emph{Remark.} 
  The definition of C-OPG and the constructions of Theorems \ref{AOPG-OPA} and \ref{OPA-AOPG} have been given with the same approach as used in \cite{LonatiEtAl2015}, i.e., avoiding possible optimizations to keep technical details as simple and essential as possible. 
  For instance, we allowed only the $^+$ operator in grammar's rhs as the minimum necessary extension to obtain the expressive power of OPAs: 
  allowing, e.g., to use set union within the scope of a $^+$ operator would have allowed in some cases more compact grammars but such a choice would have also caused an explosion of the cardinality of the $\tilde{P}$ set. 
  For the same reason we excluded a priori renaming and $\varepsilon$-rules in our grammars.

\section{Other equivalences among OPL formalisms}\label{equivalences}
\begin{figure}
\begin{tabular}{m{0.27\textwidth}m{0.71\textwidth}}
  \centering{
   \includegraphics[width=0.25\textwidth]{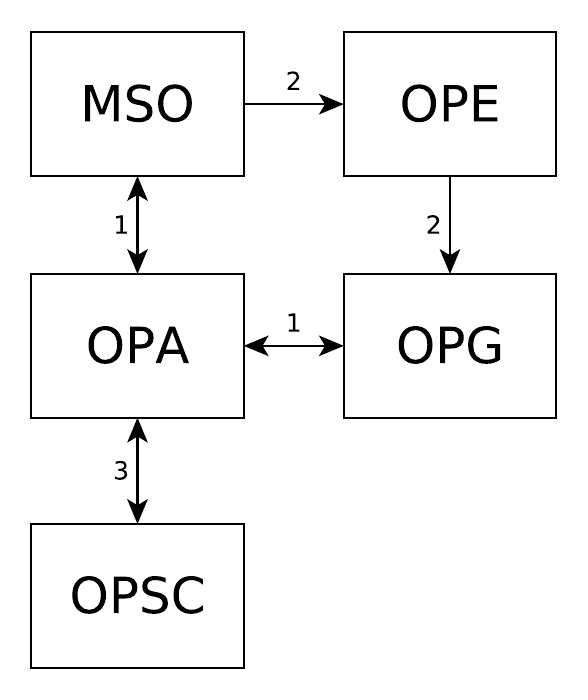}}
&
 \begin{flushleft}
   \begin{scriptsize}
 {\bf Legend}\\
  All boxes denote classes of OPLs with a common OPM:
  \begin{itemize}
      \item 
  MSO denotes OPLs defined through MSO formulas
      \item 
  OPA denotes OPLs defined through OP automata \cite{LonatiEtAl2015}
      \item 
        OPSC denotes OPLs defined through
        a Syntactic Congruence with a finite number of equivalence classes \cite{Henzinger23}
   \item 
  $\text{OPE}$ denotes OPLs defined through OPEs \cite{MPC23}
  \end{itemize}
  Arrows between boxes denote language family inclusion; they are labeled by the reference pointing 
  to paper where the property has been proved 
  (1 is \cite{LonatiEtAl2015}, 2 is \cite{MPC23}, 3 is \cite{Henzinger23}).
\end{scriptsize}
\end{flushleft}
\end{tabular}
\caption{The relations among the various characterizations of OPLs. }
\label{fig:figurona}
\end{figure}
Figure~\ref{fig:figurona} 
displays the five equivalent formalisms introduced in the literature to define OPLs. 
The reciprocal inclusions between MSO and OPA, between OPA and the finite equivalence classes of OPL syntactic congruence, and the inclusion of MSO in OPE have been proved without the hypothesis of non-circularity of the $\doteq$ relation; 
the reciprocal inclusions between C-OPG and OPA have been restated in Section \ref{A-OPG vs OPA};
it remains to consider the inclusion of OPE in OPG which was proved in \cite{MPC20} under the restrictive hypothesis.
The proof used the claim that deriving an OPG from an OPE may exploit the closure properties of OPLs, in particular, w.r.t. the Kleene $^*$ operator; 
such a closure, however, was proved in \cite{Crespi-ReghizziM12} by using OPGs, again, under the restrictive hypothesis.

Although we will see, in the next section, that all closure properties of OPLs still hold even when there are circularities in the $\doteq $ relation, 
here it is convenient to consider the case of the Kleene $^*$ operator of OPEs separately.
The reason is that in OPEs the Kleene $^*$ operator is now applied to subexpressions independently on the OP relations between their last and first terminal character.

Thus, it is first convenient to rewrite the OPE in a normal form using the $^+$ operator instead of the $^*$ one to avoid having to deal explicitly with the case of the $\varepsilon$ string.
Then, subexpressions of type $(\alpha)^+$ where the last terminal of $\alpha$ is not in relation $\doteq$ with the first one are replaced by the same procedure defined in \cite{Crespi-ReghizziM12} to prove effectively the closure w.r.t. the $^*$ operator.
The new rules will produce a right or left-linear subtree of the occurrences of $\alpha$ depending on the OP relation between the two extreme 
terminals of $\alpha$ and will avoid the use of the $^*$ and $^+$ operators which are not permitted in the original OPGs.

The remaining substrings including the $^+$ operator are the new rhs of the C-OPG.
The other technicalities of the construction of an OPG equivalent to an OPE are identical to those given in \cite{MPC20} and are not repeated here.

\section{OPLs closure properties revisited}\label{closures}

All major closure properties of OPLs have been originally proved by referring to their generating OPGs and some of them, in particular the closure w.r.t. the Kleene$^*$ operator, required the $\doteq$-acyclicity hypothesis.
Thus, it is necessary to prove them again. However, since some of those proofs are  technically rather involved, here we simply observe that it is easier to restate the same properties by exploiting OPAs which are now fully equivalent to C-OPGs.

Observe, in fact, that, thanks to the determinization of nondeterministic OPAs, closure w.r.t. boolean operations is ``for free''.
Closure w.r.t. concatenation can be seen as a corollary of the closure proved in \cite{LonatiEtAl2015} of the concatenation  between an OPL whose strings have finite length and an $\omega$-OPL, i.e., a language of infinite strings.
The construction is based on a nondeterministic guess of the position where a string of the first language could end ---and a nontrivial technique to decide whether it could accepted even in the absence of the $\#$ delimiter---.
Then, the closure w.r.t. Kleene $^*$ is obtained simply by allowing the OPA to repeat such a guess any number of times until the real $^\#$ is found.

\section{Conclusion}\label{concl}
We have filled up a longstanding ``hole'' in the theory of OPLs under the pressure of recent practical applications in the field of parallel compilation that showed how such a hole could hamper the benefits of parallelism \cite{DBLP:conf/hpcasia/LiT23}. 
The new grammatical formalism of C-OPGs, now fully equivalent to OPAs, MSO-logic, and OPEs in the definition of OPLs,
can therefore be exploited to revisit the parallel compilation techniques of \cite{DBLP:conf/hpcasia/LiT23}
(a rather unusual case where theory comes \emph{after} its motivating application) 
or to improve the efficiency of techniques based on the less powerful OPGs  \cite{BarenghiEtAl2015}.

Other algebraic and logic properties of OPLs, e.g., aperiodicity, star-freeness, first-order definability \cite{McNaughtPap71,MPC20} can be re-investigated at the light of this generalization.

\bibliographystyle{splncs04}
\bibliography{opbib}
\end{document}